\newcommand{\BE}{\begin{eqnarray}}
\newcommand{\EE}{\end{eqnarray}}
\newcommand{\be}{\begin{eqnarray}}
\newcommand{\ee}{\end{eqnarray}}
\newcommand{\BEN}{\begin{eqnarray*}}
\newcommand{\EEN}{\end{eqnarray*}}
\newcommand{\ben}{\begin{eqnarray*}}
\newcommand{\een}{\end{eqnarray*}}
\newcommand{\BA}{\begin{array}}
\newcommand{\EA}{\end{array}}
\newcommand{\ba}{\begin{array}}
\newcommand{\ea}{\end{array}}
\newcommand{\BLN}{\begin{align*}}
\newcommand{\ELN}{\end{align*}}

\newcommand{\pt}{\partial}
\newcommand{\cd}{\cdot}
\newcommand{\f}{\frac}

\newcommand\disp{\displaystyle}

\newcommand{\Li}{\operatorname{Li}}

\newcommand{\op}[1]{\operatorname{#1}}

\documentclass[12pt,a4paper,twoside]{article}

\usepackage{enumerate}
\usepackage[english]{babel}
\usepackage[latin1]{inputenc}
\usepackage{subfigure}
\usepackage[T1]{fontenc}
\usepackage{graphicx}
\usepackage{latexsym}
\usepackage{amsmath,amssymb,mathrsfs}
\usepackage{ifpdf}
\usepackage{cancel} 
\usepackage{amsmath, amsfonts, amsthm, amssymb, amscd}
\usepackage{graphicx}
\usepackage{makeidx}
\usepackage{sectsty}
\usepackage{color}
\usepackage{relsize} 

\usepackage[format=hang,margin=0.1cm,textfont=footnotesize,labelfont=up]{caption}[2013/02/03] 
\usepackage{eufrak}
\usepackage{fancyhdr}
\usepackage{hyperref}


\usepackage{tikz}
\usetikzlibrary{shapes.arrows}

\sectionfont{\large}
\subsectionfont{\normalsize}

\pagestyle{headings}

\newtheoremstyle{theorems}  
  {\topsep}   
  {\topsep}   
  {\itshape}  
  {10pt}      
  {\bfseries} 
  {.}         
  {5pt} 	  
  {}          
  
\newtheoremstyle{definitions}  
  {\topsep}   
  {\topsep}   
  {}          
  {10pt}      
  {\bfseries} 
  {.}         
  {5pt} 	  
  {}          

\theoremstyle{theorems}

\newtheorem{proposition}{Proposition}[section]
\newtheorem{theorem}{Theorem}[section]
\newtheorem{lemma}{Lemma}[section]

\newtheorem{definition}{Definition}[section]

\setcounter{tocdepth}{2}

\begin{document}

\pagestyle{myheadings}
\fancyhf{}
\fancyhead[L]{}
\fancyhead[R]{}

\title{\Large{Higher-order Airy scaling in deformed Dyck paths}}

\author{\normalsize{Nils Haug\footnote{School of Mathematical Sciences, Queen Mary University of London, London E1 4NS, UK}, Adri Olde Daalhuis\footnote{School of Mathematics, University of Edinburgh, Edinburgh EH9 3JZ, UK}~~and Thomas Prellberg$^*$}}

\date{\normalsize{\today}}

\maketitle

\begin{abstract}
\footnotesize{\noindent We introduce a deformed version of Dyck paths (DDP), where additional to the steps allowed for Dyck paths, `jumps' orthogonal to the preferred direction of the path are permitted. We consider the generating function of DDP, weighted with respect to their half-length, area and number of jumps. This rep\-re\-sents the first example of an exactly solvable two-dimensional lattice vesicle model showing a higher-order multi\-critical point. Applying the generalized method of steepest descents, we see that the associated two-variable scaling function is given by the logarithmic derivative of a generalized (higher-order) Airy integral.}
\end{abstract}

\sectionfont{\normalsize}
\subsectionfont{\normalsize}

\section{Introduction}
\label{section:introduction}

Biological vesicles act as containers transporting molecules inside cells, and consist of a lipid membrane enclosing a fluid \cite{Alberts07}. Depending on the pressure difference between the inside and the outside, the vesicles can be found in a deflated or an inflated phase. When analysing the transition between these two phases using statistical mechanics, the pressure difference is modelled by introducing a volume fugacity into the partition function. In a two-dimensional setting, the vesicles can be described by different subclasses of two-dimensional self-avoiding lattice polygons (SAP) \cite{Leibler89,Fisher91}, in the same way as self-avoiding walks are used to model the behaviour of long polymer chains subject to volume interactions \cite{Flory53}. Mathematically, the corresponding grand-canonical partition function is the generating function of the polygons, weighted with respect to their perimeter and their area. 

On the basis of exact enumeration data, it was conjectured in \cite{Richard01} that for vesicles modelled by unrestricted, rooted SAP, the pressure induced phase transition is characterized by a tri-critical point, with the associated single-variable scaling function being given by the logarithmic derivative of the Airy function, which for $s\in\mathbb{C}$ is defined as
\be 
\op{Ai}(s) = \f{1}{2\pi i}\int_{e^{-i\pi/3}\infty}^{e^{i\pi/3}\infty} \exp\left(\f{u^3}{3}-s u\right)du
\label{eq:def_Airy_function}
\ee
(\href{http://dlmf.nist.gov/9}{chap. 9} in \cite{NIST}). By applying the method of steepest descents to the exact solution for the generating function, the same type of scaling behaviour has been proven rigorously to hold for staircase polygons and Dyck and Schr\"oder paths \cite{Prellberg95,Haug15,Haug16}. These models can be interpreted as directed subclasses of SAP, with a lower sawtooth boundary in the latter two cases. The fact that their scaling function is the same as the one conjectured for unrestricted, rooted SAP supports the assumption that close to the phase transition, the overhangs present in unrestricted SAP can be neglected and therefore the restriction to directed polygons does not change the qualitative behaviour of the model. Further evidence for this comes from the fact that the asymptotic behaviour of the mean perimeter in the infinite area ensemble is the same both for convex and column-convex polygons \cite{Mitra10}. The difference between these two models is that in the latter, overhangs in one direction of the lattice are allowed, which makes it plausible that the asymptotic behaviour of the model is also not changed when overhangs in the second direction are allowed.

Neglecting overhangs is also usual to consider in interface physics, for example by using a solid-on-solid model to describe the boundary of oppositely magnetized domains in the Ising model or the shape of liquid drops on a substrate at low temperatures \cite{Temperley52,Dietrich88}.

In \cite{Cardy01}, it was postulated that for SAP, there exists an entire hierarchy of higher-order scaling functions given for $k\geq 3$ by the loga\-rithmic deri\-vative $(\pt/\pt s_1)\Theta_k(s_1,\dots,s_{k-2})$, where for $s_1,s_2,\dots,s_{k-2}\in\mathbb{C}$, and
\be 
\Theta_k(s_1,\dots,s_{k-2}) = \f{1}{2\pi i}\int_{e^{-i\pi/k}\infty}^{e^{i\pi/k}\infty} \exp\left(\f{u^k}{k}-\sum_{j=1}^{k-2}s_{j} u^{j}\right)du
\label{eq:generalized_Airy}
\ee
(\href{http://dlmf.nist.gov/36.2}{$§$36.2} in \cite{NIST}). This function can be seen as a generalized Airy function, since $\Theta_3(s)=\op{Ai}(s)$. We present here the first concrete example of a lattice polygon model with a higher-order multicritical point characterized by the two-variable scaling function 
\be 
\Phi(s_1,s_2) = \f{\pt}{\pt s_1}\ln\Big(\Theta_4(s_1,s_2)\Big)
\label{eq:Pearcey_scaling_function},
\ee
where $\Theta_4(s_1,s_2)$ is also called a Pearcey function.
It consists of a deformed version of Dyck paths (DDP), where additional to the steps $(1,0)$ and $(0,1)$ allowed for (standard) Dyck paths, `jump' steps in the direction $(-1,1)$ are allowed. For the reasons given above, we consider this model a valid simplification of two-dimensional vesicles for the purpose of studying their critical behaviour.

In Section \ref{section:model}, we will define DDP precisely and derive the functional equation for their generating function, weighted with respect to their area, their length and their number of jumps. An expression for the generating function in the form of a fraction of two basic hypergeometric series will be obtained in Section \ref{section:sol_funeq}. The main result is given in Section \ref{section:result}, and the remaining sections contain the steps of its derivation.  A contour integral representation for the series occurring in the generating function of DDP is derived in Section \ref{section:contintrep}, which in the limit $q\to 1^-$ has a leading contribution in the form of a saddle point integral. The location of the relevant saddle points depending on the parameters $w$ and $t$ is discussed in Section \ref{section:location_of_the_saddle_points}, and the geometry of the paths of steepest descent originating from them is investigated in Section \ref{section:geometry_of_paths_of_steepest_descent}. In Section \ref{section:canonicaltrafo}, the integral expression for the basic hypergeometric series is then transformed into a canonical form, and the asymptotic behaviour of the coefficients of this transformation around the multicritical point is analysed. The asymptotic expression for the basic hypergeometric series is then obtained by evaluating the transformed integral in Section \ref{section:asyofphi}, which directly leads to Theorem \ref{thm:scaling_around_multicritical_point}.

\section{The model}
\label{section:model}

The model of DDP is defined as follows.

\begin{definition}
For $m,s \in \mathbb{Z}_{\geq 0}$ and $s\geq 2m$, a deformed Dyck path (DDP) of half-length $m$ is a walk $(x_k,y_k)_{k=0}^{s}$ on $\mathbb{Z}^2$, such that $(x_0,y_0)=(0,0)$,  $(x_{s},y_{s})=(2m,2m)$ and $y_k\geq x_k$ for all $0 \leq k \leq s$. Moreover, if $(x_k,y_k)=(x,y)$ for $0\leq k < s$, then $(x_{k+1},y_{k+1})$ is either $(x,y+1)$ or $(x+1,y)$ or $(x-1,y+1)$, which we call an up-step, a down-step or a jump, respectively.
\end{definition}
We consider the generating function
\be 
G(w,t,q) = \sum_{k=0}^\infty\sum_{m=0}^\infty\sum_{n=0}^\infty p_{k,m,n}\,w^k\,t^m\,q^n,
\label{eq:def_gf_ddp}
\ee
where $p_{k,m,n}$ is the number of DDP with $k$ jumps, half-length $m$ and area $n$, with the area being defined as the number of full lattice cells enclosed between the path and the main diagonal $x=y$. Fig.\,\ref{fig:gen_dp} shows an example of a DDP with half-length 9, 3 jumps and area 12. 

\begin{figure}[htbp]
\centering
\includegraphics[width=0.6\textwidth]{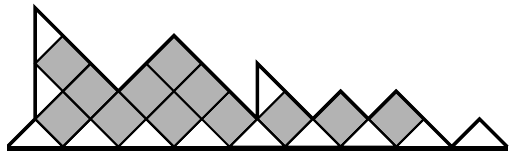}
\caption{A DDP of half-length 9, 3 jumps and area 12. The lattice is rotated such that its main diagonal lies horizontally in the image.}
  \label{fig:gen_dp}
\end{figure}

To obtain a functional equation for $G(w,t,q)$, we use the following factorization argument.  A DDP has either half-length zero, or it starts with an up-step followed by a DDP followed by a down-step and then another DDP, or it starts with a jump followed by a DDP followed by a down-step followed by another DDP followed by a down-step and then another DDP -- see Fig.\,\ref{fig:fun_eq_gen_dp} for an illustration.
\begin{figure}[h!]
  \centering
  \def\svgwidth{300pt}
  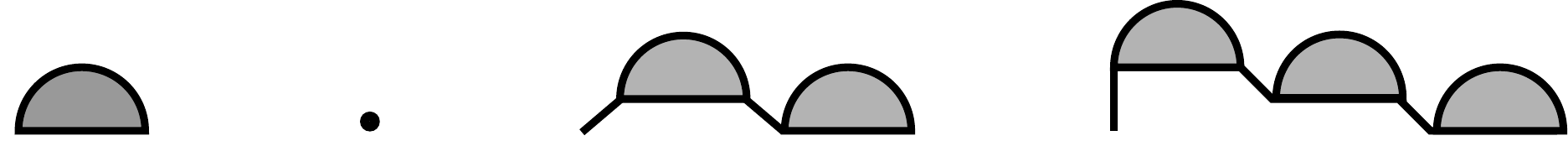
\caption{Graphical decomposition of the set of DDP, leading to Eq.\eqref{eq:fun_eq_gen_dp}.}
  \label{fig:fun_eq_gen_dp}
\end{figure}

From this decomposition we obtain
\be
w\,t\,G(q^2t)\,G(qt)\,G(t)+t\,G(qt)\,G(t)-G(t)+1&=&0,
\label{eq:fun_eq_gen_dp}
\ee
where $G(w,t,q)\equiv G(t)$ for brevity. Note that Eq.\eqref{eq:fun_eq_gen_dp} has a unique solution analytic at $t=0$. For $w=0$, it is satisfied by the generating function of Dyck paths, weighted with respect to their area and half-length \cite{Haug15}.

We also note that every DDP can be mapped uniquely onto a (standard) Dyck path by replacing every jump step by two consecutive up-steps. In this way, each Dyck path represents a family of DDP -- see Fig.\,\ref{fig:alternative_model}. The function $G(w,t,q)$ can therefore alternatively be interpreted as the generating function of Dyck paths, weighted with respect to their half-length and their area, with an additional weight $F_k(w/t,1/q)$ associated to each sequence of $k$ consecutive up-steps, followed by a down-step. Here, $F_k(s,q)$ is the generating function of appropriately weighted dimer coverings of an interval of length $k$ (for $q=1$, see \cite{Viennot89}).

\begin{figure}[h!]
\centering
\includegraphics[width=0.7\textwidth]{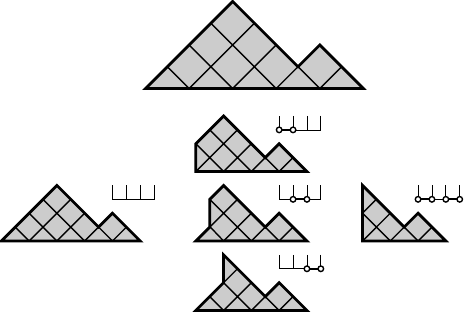}
\caption{A Dyck path (top) and the family of DDP representing it, together with the corresponding dimer coverings.}
  \label{fig:alternative_model}
\end{figure}

The $q$-Fibonacci polynomials $F_k(s,q)\equiv F_k$ satisfy the recurrence
\be
\label{fib_recurrence}
F_0=F_1=1\;,\quad F_n=F_{n-1}+q^{n-1} sF_{n-2}\quad\text{for $n\geq2$}
\ee
and are given explicitly by
\be
F_k(s,q)=\sum_{l=0}^{\lfloor k/2\rfloor}{k-l \brack l}_qq^{l^2}s^l\;,
\ee
see \cite{Carlitz75}. Using the dimer interpretation, a decomposition of Dyck paths by their left-most rise leads to an alternative functional equation for $G(w,t,q)\equiv G(t)$,
\be
G(t)=\sum_{k=0}^\infty t^kq^{\binom k2}F_k(w/t,1/q)\prod_{l=0}^{k-1}G(q^lt)\;.
\ee
For $q=1$, it follows from the inversion Lemma \cite{Viennot89} and the asymptotic behaviour of the number of dimer coverings \cite{Bousquet02} that $F_k(w/t,1)\geq 0$ for $w/t\geq -1/4$.

\section{Solution of the functional equation}
\label{section:sol_funeq}

Analogous to the case of Dyck and Schr\"oder paths \cite{Flajolet80,Haug16}, inserting the ansatz
\be 
G(w,t,q) = \f{H(w,qt,q)}{H(w,t,q)},
\label{eq:gf_generalized_DP}
\ee
into Eq.\eqref{eq:fun_eq_gen_dp} leads to the linearized functional equation
\be 
w t H(q^3 t)+ t H(q^2 t) - H(q t) + H(t) = 0,
\label{eq:linearized_fun_eq}
\ee
where $H(t) \equiv H(w,t,q)$. For $w,t,q \in \mathbb{C}$ and $|q|<1$, Eq.\eqref{eq:linearized_fun_eq} is solved by the basic hypergeometric series \cite{Gasper90}
\be 
H(w,t,q) = \phi(-w,t,q) =\sum_{n=0}^\infty \f{(-w;q)_n}{(q;q)_n}(-t)^n q^{n^2-n},
\label{eq:def_1_phi_2}
\ee
where for $n\in\mathbb{N}$ and $z,q\in\mathbb{C}$, $(z;q)_n = \prod_{k=0}^{n-1} (1-q^k z)$.

For $q\to 1^-$, both $\phi(-w,t,q)$ and $\phi(-w,q t,q)$ diverge and it is therefore not immediately clear which value $G(w,t,q)$ takes in this limit. But if we substitute $q=1$ into Eq.\eqref{eq:fun_eq_gen_dp}, then we obtain a cubic equation for $G(w,t,1)$, which is readily solved. In the special case $w=-1/9$, the radius of convergence of $G(w,t,1)$ is determined by a cubic root singularity at $t=1/3$ and around this value we therefore expect an area-length scaling behaviour which is qualitatively different from the Airy function scaling found for Dyck and Schr\"oder paths and staircase polygons. In order to analyse the asymptotics of the generating function in vicinity of the point $w=-1/9$, $t=1/3$ as $q\to 1^-$, we apply the method of steepest descent, generalized to the case of several coalescing saddle points.

\section{The main result}
\label{section:result}

The principal result of this paper is stated in the following theorem, which is an immediate consequence of Proposition \ref{prop:asymptotics_a<=1/9}.

\begin{theorem}
Let  $q=e^{-\epsilon}$, $\delta=\mathcal{O}\big(\epsilon^{1/2}\big)$ and $\tau=\f{3}{2}\delta+\mathcal{O}\big(\epsilon^{3/4}\big)$ as $\epsilon\to 0^+$. Then
\be
G\left(\delta-\f{1}{9},\f{1}{3}-\tau,q\right) = 3\,\Big(1 +2^{1/4}\,\Phi(s_1,s_2)\,\epsilon^{1/4}+\mathcal{O}\big(\epsilon^{1/2}\big)\Big),
\label{eq:scaling_relation}
\ee
as $\epsilon\to 0^+$, for all $s_1,s_2\in \mathbb{R}$ such that $|\Phi(s_1,s_2)|<\infty$, where
\begin{equation}
s_1=  3 \sqrt[4]{2}\left(\tau - \f{3}{2} \delta \right)\epsilon^{-3/4} ~\text{and }~  s_2=  \f{27\sqrt{2}}{8}\left(\delta + \f{1}{40} \tau^2 \right)\epsilon^{-1/2},
\label{eq:s1s2}
\end{equation}
and where $\Phi(s_1,s_2)$ is defined in Eq.\eqref{eq:Pearcey_scaling_function}.
\label{thm:scaling_around_multicritical_point}
\end{theorem}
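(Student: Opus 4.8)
I would read Theorem~\ref{thm:scaling_around_multicritical_point} as a corollary of the uniform asymptotics of the basic hypergeometric series $H(w,t,q)=\phi(-w,t,q)$ near the multicritical point that Proposition~\ref{prop:asymptotics_a<=1/9} is designed to supply, and I would split the work into two stages. The substance lives in the first stage, which establishes an approximation of the form $H(w,t,q)\sim P(w,t,q)\,\Theta_4(s_1,s_2)$ as $q=e^{-\epsilon}\to 1^-$. Starting from the contour-integral representation of $\phi$ derived in Section~\ref{section:contintrep}, I would write $H$ as an integral carrying a large parameter $1/\epsilon$ in its exponent and analyse it by steepest descents. At generic $(w,t)$ a single saddle dominates, but exactly at $w=-\tfrac19$, $t=\tfrac13$ four saddles coalesce; this is the same degeneracy that collapses the $q=1$ cubic obtained from Eq.\eqref{eq:fun_eq_gen_dp} into the triple root $(G-3)^3=0$, and it forces the scaling exponent $\epsilon^{1/4}$ rather than the Airy exponent. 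The generalized method of steepest descents then calls for a cubic-in-the-variable change of coordinates bringing the exponent into the Pearcey normal form $\tfrac{u^4}{4}-s_2u^2-s_1u$, so that the leading contribution is $\Theta_4(s_1,s_2)$ times a smooth prefactor $P$. Tracking the transformation coefficients near the coalescence (Section~\ref{section:canonicaltrafo}) produces the canonical combinations $\tau-\tfrac32\delta$ and $\delta+\tfrac1{40}\tau^2$ and fixes the constants $3\sqrt[4]{2}$ and $\tfrac{27\sqrt2}{8}$ in Eq.\eqref{eq:s1s2}.

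The second stage is what makes the theorem ``immediate''. I would form $G=H(w,qt,q)/H(w,t,q)$ and treat the substitution $t\mapsto qt=e^{-\epsilon}t$ as a small shift of the scaling variables. Since $w$ is held fixed, $\delta=w+\tfrac19$ does not change, whereas $\tau=\tfrac13-t$ is replaced by $\tau'=\tfrac13-e^{-\epsilon}t=\tau+\tfrac{\epsilon}{3}+\mathcal{O}\big(\epsilon^{3/2}\big)$, using $\tau=\mathcal{O}\big(\epsilon^{1/2}\big)$. Inserting this into Eq.\eqref{eq:s1s2} gives $s_1\mapsto s_1+2^{1/4}\epsilon^{1/4}+\mathcal{O}\big(\epsilon^{3/4}\big)$, while $s_2\mapsto s_2+\mathcal{O}(\epsilon)$, the latter shift being negligible because $\tau$ enters $s_2$ only through $\tau^2$. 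A first-order Taylor expansion of $\ln\Theta_4$ in its first argument then yields
\[
\f{\Theta_4(s_1',s_2')}{\Theta_4(s_1,s_2)}=\exp\left(2^{1/4}\epsilon^{1/4}\,\f{\pt}{\pt s_1}\ln\Theta_4(s_1,s_2)+\mathcal{O}\big(\epsilon^{1/2}\big)\right)=1+2^{1/4}\,\Phi(s_1,s_2)\,\epsilon^{1/4}+\mathcal{O}\big(\epsilon^{1/2}\big),
\]
which is precisely where the logarithmic derivative in the definition Eq.\eqref{eq:Pearcey_scaling_function} of $\Phi$ originates. The quadratic Taylor term is $\mathcal{O}\big(\epsilon^{1/2}\big)$ and is absorbed into the error, and the $\mathcal{O}\big(\epsilon^{3/4}\big)$ imprecision in $\Delta s_1$ is likewise subsumed.

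It then remains to show that the prefactor ratio satisfies $P(w,qt,q)/P(w,t,q)=3\big(1+\mathcal{O}\big(\epsilon^{1/2}\big)\big)$: the constant $3$ is the triple root above, and since $P$ is the \emph{smooth} merge-point prefactor, any algebraic powers of $\epsilon$ in it cancel between numerator and denominator, so the only $\epsilon$-dependence of the ratio comes from the shift $t\mapsto qt$, contributing corrections of order $\tau,\delta=\mathcal{O}\big(\epsilon^{1/2}\big)$. Multiplying the two ratios produces Eq.\eqref{eq:scaling_relation}. The main obstacle is entirely in the first stage, namely constructing the uniform coalescing-saddle (Pearcey) approximation and computing $P$ to one order beyond its leading exponential; within the second stage the one genuinely delicate point is confirming that $P(w,qt,q)/P(w,t,q)$ carries no term of order $\epsilon^{1/4}$, since such a term would add to the coefficient of $\Phi$ and spoil the clean form of Eq.\eqref{eq:scaling_relation}.
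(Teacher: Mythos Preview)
Your two-stage plan is sound and the Taylor-shift argument for the $\Theta_4$ ratio is correct, but your route in the second stage differs from the paper's and the prefactor step, which you correctly flag as delicate, is not closed. The paper does \emph{not} treat $\phi(a,qt,q)$ as a parameter shift of $\phi(a,t,q)$. Proposition~\ref{prop:asymptotics_a<=1/9} is written for $\phi(a,q^kt,q)$ with $k$ fixed: the exponent $f(z)$, and hence the normal-form coefficients $\alpha,\beta,\gamma$, are \emph{independent of $k$}, while the $k$-dependence sits entirely in the slowly varying amplitude $g_\epsilon(z)/z^k$ and thus in $P^{(k)},Q^{(k)},R^{(k)}$. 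Substituting Eq.~\eqref{eq:asymptotics_1_phi_2_a,t<1} for $k=0,1$ into $G=\phi(a,qt,q)/\phi(a,t,q)$ cancels $A(a,q)e^{\gamma/\epsilon}\epsilon^{1/4}$ exactly, and the constant $3$ is simply $P^{(1)}/P^{(0)}=(2^{1/4}\sqrt3/2)/(2^{1/4}\sqrt3/6)$; the $\epsilon^{1/4}$ term then comes from the $Q^{(k)}\Theta_4^{(1)}$ contributions. Your alternative---keeping $k=0$ and shifting $\tau\mapsto\tau'=\tau+\epsilon/3+\mathcal{O}(\epsilon\tau)$---buys you the $2^{1/4}\Phi\,\epsilon^{1/4}$ term directly from the $s_1$-shift, at the cost of having $\alpha,\beta,\gamma$ shift as well. (Minor slip: \emph{three} saddles of $f$ coalesce, not four; the quartic is the normal form, whose derivative is cubic.)

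The gap is your justification that $P(w,qt,q)/P(w,t,q)=3(1+\mathcal{O}(\epsilon^{1/2}))$. The prefactor in $H\sim P\,\Theta_4$ is not merely ``smooth'': it contains $\exp(\gamma/\epsilon)$ from the constant term of Eq.~\eqref{eq:transformation_formula}, and under the shift this gives a ratio $\exp\bigl((\gamma'-\gamma)/\epsilon\bigr)$, which tends to a nonzero constant rather than to $1$. Concretely, $\gamma'-\gamma=\partial_\tau\gamma\cdot(\epsilon/3)+\mathcal{O}(\epsilon\tau,\epsilon^2)$, and from Eq.~\eqref{eq:fis_eq1} together with $\partial f(z_j)/\partial t=\ln(z_j)/t$ at the coalescence point $z_j=t=1/3$ one finds $\partial_\tau\gamma|_{(0,0)}=3\ln3$, so $(\gamma'-\gamma)/\epsilon=\ln3+\mathcal{O}(\tau,\delta)$ and the ratio is $3(1+\mathcal{O}(\epsilon^{1/2}))$ as you want. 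Thus your claimed value is right, but the mechanism is the $e^{\gamma/\epsilon}$ shift, not smoothness of $P$ or an appeal to the $q=1$ cubic; the latter is only a consistency check. Once this is supplied, your argument goes through and no spurious $\epsilon^{1/4}$ term appears in the prefactor ratio, since $\gamma$ is regular in $(\tau,\delta)$ and the remaining amplitude factors change only by $\mathcal{O}(\epsilon)$.
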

For example, Theorem \ref{thm:scaling_around_multicritical_point} gives for all $s$ such that $|\Phi(s,0)|< \infty$,
\begin{equation}
G\left(-\f{1}{9},\f{1}{3}\left(1-s \epsilon^{\phi_{cr}}\right),q\right)=3 \left(1 +F\big(s\big)\epsilon^{\gamma_u}+\mathcal{O}\big(\epsilon^{1/2}\big)\right)
\label{eq:scaling_relation_delta=0}
\end{equation}
as $q=e^{-\epsilon}\to 1^-$, with $F(s)=\sqrt[4]{2}\Phi(\sqrt[4]{2} s,0)$, $\gamma_u=\f{1}{4}$ and $\phi_{cr}=\f{3}{4}$.

The exponents $\gamma_u$ and $\phi_{cr}$, together with $\gamma_t=\f{\gamma_u}{\phi_{cr}}=\f{1}{3}$ characterize the singular behaviour of $G(-\f{1}{9},t,q)$ around the multicritical point $(w,t,q)=(-\f{1}{9},\f{1}{3},1)$. The singular behaviour of $G\left(-\f{1}{9},\f{1}{3},1-\epsilon\right)$ as $\epsilon\to 0^+$ is determined by $\gamma_u$; $\gamma_t$ describes the singular behaviour of $G(-\f{1}{9},t,1)$ as $t\to \f{1}{3}^-$, and $\phi_{cr}$ is called the crossover exponent of the model.

The multcritical point for $w=-\f{1}{9}$ is the endpoint of a line of tri-critical points $(w,t,q)=(w,t_c(w),1)$ for $w>-\f{1}{9}$, which are characterized by the exponents $\gamma_u=\f{1}{3}$, $\gamma_t=\f{1}{2}$ and hence $\phi_{cr}=\f{2}{3}$ (see \cite{Rensburg00} for a general introduction to tri-critical scaling). The special case $w=0$ was analysed in \cite{Haug15}. In Table \ref{table_critexp}, we summarize the values of the critical exponents for $w=-\f{1}{9}$ and $w>-\f{1}{9}$.

\begin{table}
\begin{center}
\begin{tabular}{rccc}
		$w~~$		& ~~$\gamma_t$~~ & ~~$\gamma_u$~~ & ~~$\phi_{cr}$ \vspace{1mm}\\\hline  \hline \vspace{-3mm} &&& \\
$-\f{1}{9}$~~	& $\f{1}{3}$& $\f{1}{4}$  & $\f{3}{4}$  \vspace{2mm}\\ 
$>-\f{1}{9}$~~	& $\f{1}{2}$& $\f{1}{3}$  & $\f{2}{3}$  \vspace{2mm}\\ \hline  \hline  &&& \vspace{-2mm}
\end{tabular}
\end{center}
\caption{The critical exponents characterizing the singular behaviour of the generating function of DDP around the multicritical point, depending on the value of $w$.}
\label{table_critexp}
\end{table}

Figure \ref{fig:phasediag} shows a schematic picture of the phase diagram for fixed jump weight $w\geq-\f{1}{9}$. The horizontal line for $t\leq t_c(w)$ is a line of essential singularities and the dashed line is a line of poles, whose slope at the critical point is determined by the crossover exponent $\phi_{cr}$. Both lines meet in the multicritical point. Another picture of the phase diagram for fixed $q=1$ is given in Figure~\ref{fig:critical_t_values}, where the solid line labelled by $t_c^+$ is the line of tri-critical points.

\begin{figure}[htb]
  \centering
  \def\svgwidth{200pt}
  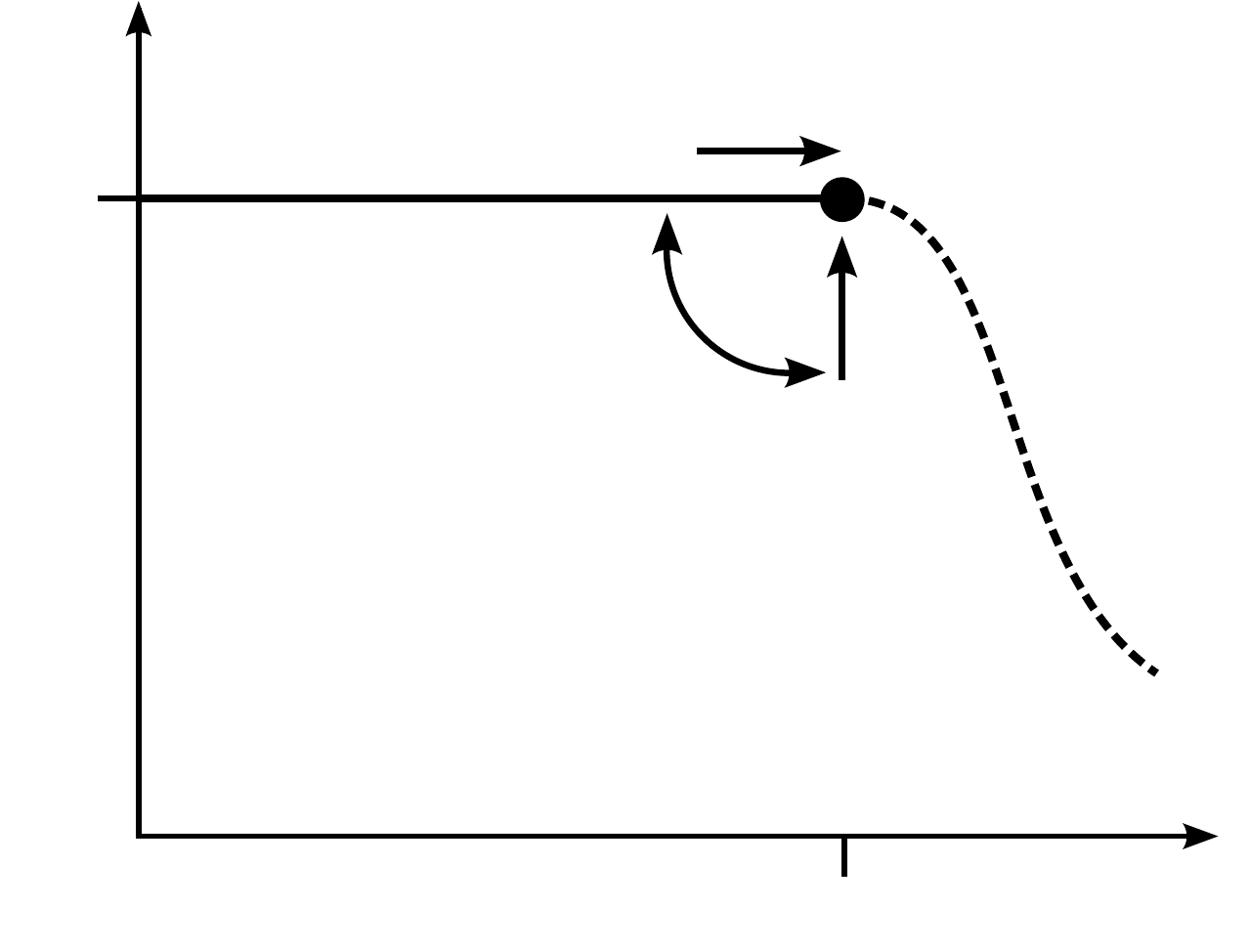
  \caption{Schematic picture of a cut through the 3-dimensional phase diagram of DDP with $t,w$ and $q$ as defined below Eq.\eqref{eq:def_gf_ddp}, along a plane with constant $w\geq-\f{1}{9}$.}

  \label{fig:phasediag}
\end{figure}

\section{Contour integral representation of $\phi(a,q^k t,q)$}
\sectionmark{Contour integral representation}
\label{section:contintrep}

We begin by deriving a contour integral representation for the series defined in Eq.\eqref{eq:def_1_phi_2}. The proof of the following Lemma is a straightforward modification of those given in \cite{Prellberg95} and \cite{Haug15} and will therefore not be carried out here.
\begin{lemma}
For complex $a,t$ with $|\arg(1-a)|<\pi$ and $t\neq 0$, $0<q<1$ and $k\in\mathbb{Z}_{\geq 0}$,
\be
\phi(a,q^k t,q) =\f{A(a,q)}{2\pi i}\int_{C}\f{z^{\f{1}{2}(\log_q(z)+1) - \log_q(t)}}{z^k(z;q)_\infty(a/z;q)_\infty}dz
\label{eq:exact_integral_formula},
\ee
where $A(a,q)=(q;q)_\infty(a;q)_\infty$ and $C$ is a contour from $\infty \exp(-i\psi)$ to $\infty \exp(i\varphi)$, with $(\psi,\varphi)\in\,]0,\pi[^2$, intersecting the real axis at $z=\rho$, where $0<\rho<1$, such that all zeros of $(a/z;q)_\infty$ lie to the left of $C$ -- see Fig. \ref{fig:contour_C}.
\label{lemma:exact_integral_formula}
\end{lemma}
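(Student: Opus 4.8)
The plan is to evaluate the contour integral on the right-hand side of \eqref{eq:exact_integral_formula} by residues and to match the resulting sum term-by-term with the defining series \eqref{eq:def_1_phi_2}, following the arguments of \cite{Prellberg95,Haug15}. Write the integrand as $f(z)=z^{\frac12(\log_q z+1)-\log_q t}\big/\big(z^k(z;q)_\infty(a/z;q)_\infty\big)$. Its singularities come from the two infinite products in the denominator: $1/(z;q)_\infty$ has simple poles at $z=q^{-m}$, $m\ge 0$, all lying on the positive real axis with $|z|\ge 1$, hence to the right of $C$; while $1/(a/z;q)_\infty$ has poles at $z=aq^m$, which the hypothesis places to the left of $C$. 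The factor $z^{\frac12\log_q z}=\exp\!\big((\log z)^2/(2\log q)\big)$ is the crucial one: since $\log q<0$, it decays like a Gaussian in $\log|z|$, which both guarantees convergence of the integral along $C$ and, together with the decay of $1/(z;q)_\infty$ between its poles, kills the contributions of the arcs used to close the contour.

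First I would fix the principal branch of $\log z$ (cut along $(-\infty,0]$), under which $f$ is single-valued on a cut plane containing $C$ and all the poles $z=q^{-m}$. I would then close $C$ to the right by joining its two ends, at arguments $-\psi$ and $\varphi$ of modulus $<\pi$, with circular arcs of radius $q^{-N-1/2}$; these pass between the poles $q^{-N}$ and $q^{-N-1}$ and, because they run through the positive-real-axis direction, stay clear of the branch cut. A direct estimate on such an arc gives $|(z;q)_\infty|\sim q^{-(N+1)^2/2}$, so that $1/|(z;q)_\infty|$ is exponentially small, while $(a/z;q)_\infty\to 1$ and the Gaussian factor contributes $q^{(N+1/2)^2/2}$; hence the arc integral tends to $0$ as $N\to\infty$. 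Since the closed contour is then negatively (clockwise) oriented around the right-hand region, this yields $\int_C f\,dz=-2\pi i\sum_{m\ge 0}\operatorname{Res}_{z=q^{-m}}f$.

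Next I would compute the residue at $z=q^{-m}$. Using $\operatorname{Res}_{z=q^{-m}}(z;q)_\infty^{-1}=(-1)^{m+1}q^{\binom m2}\big/\big((q;q)_\infty(q;q)_m\big)$, together with the elementary evaluations $z^{-k}=q^{mk}$, $z^{\frac12(\log_q z+1)-\log_q t}=q^{\binom m2}t^m$ and $(a/z;q)_\infty=(a;q)_\infty/(a;q)_m$ at $z=q^{-m}$, the residue collapses to $\operatorname{Res}_{z=q^{-m}}f=\frac{(-1)^{m+1}}{(q;q)_\infty(a;q)_\infty}\frac{(a;q)_m}{(q;q)_m}\,q^{m^2-m}(q^k t)^m$. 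Multiplying by $A(a,q)/(2\pi i)=(q;q)_\infty(a;q)_\infty/(2\pi i)$ and by the $-2\pi i$ from the orientation cancels the two $q$-shifted factorials in the prefactor and flips $(-1)^{m+1}$ to $(-1)^m$, leaving exactly $\sum_{m\ge 0}\frac{(a;q)_m}{(q;q)_m}(-q^k t)^m q^{m^2-m}=\phi(a,q^k t,q)$, which is the claim.

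The main obstacle is not the residue algebra, which is routine, but the analytic bookkeeping needed to make the manoeuvre rigorous: rendering the multivalued factor $z^{\frac12(\log_q z+1)-\log_q t}$ single-valued on a domain containing both $C$ and all the expanding arcs, verifying that these arcs never meet the branch cut for the admissible opening angles $\psi,\varphi\in\,]0,\pi[$, and controlling $1/\big((z;q)_\infty(a/z;q)_\infty\big)$ uniformly between consecutive poles so that the Gaussian decay genuinely dominates. The hypothesis $|\arg(1-a)|<\pi$ enters here to keep $A(a,q)$ nonzero and to ensure the left-hand poles $aq^m$ stay strictly separated from $C$ and from the cut. Since all of these estimates are carried out in \cite{Prellberg95,Haug15} and require only cosmetic changes for the present integrand (the extra factor $z^{-k}$ and the shift $t\mapsto q^k t$), I would adapt those arguments rather than reproduce them.
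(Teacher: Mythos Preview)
Your proposal is correct and is exactly the argument the paper has in mind: it explicitly omits the proof, stating that it is ``a straightforward modification of those given in \cite{Prellberg95} and \cite{Haug15}'', and the approach in those references is precisely the residue evaluation you outline (close $C$ to the right through arcs of radius $q^{-N-1/2}$, use the Gaussian decay of $z^{\frac12\log_q z}$ together with the growth of $(z;q)_\infty$ to kill the arcs, and pick up the residues at $z=q^{-m}$). Your residue bookkeeping and orientation are right, and your remark that the only changes relative to \cite{Haug15} are the harmless factor $z^{-k}$ and the extra $q$-Pochhammer $(a/z;q)_\infty$ (bounded and pole-free on the arcs) is exactly why the paper calls the modification straightforward.
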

\begin{figure}[htb]
  \centering
  \def\svgwidth{300pt}
  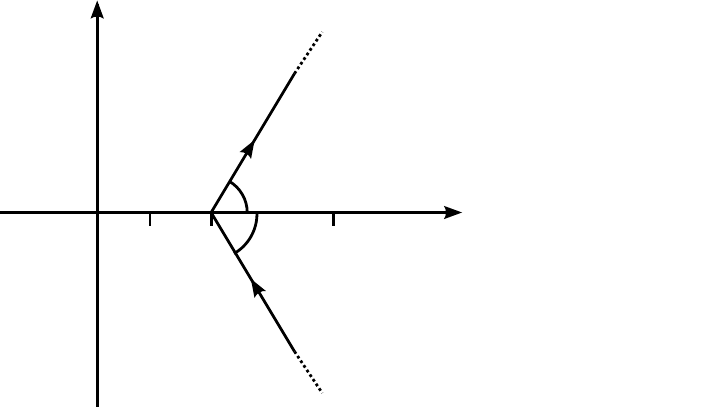
\caption{The contour $C$ used in Eq.\eqref{eq:exact_integral_formula}.}
  \label{fig:contour_C}
\end{figure}

In \cite{Prellberg95} it was shown by applying the Euler-Maclaurin summation formula that for complex $z$ with $\left|\arg(1-z)\right|<\pi$, $0<q<1$ and $m\in\mathbb{N}$,
\be
\ln (z;q)_\infty&=&\f{1}{\ln(q)}\Li_2(z)+\f{1}{2}\ln(1-z)+\nonumber \\
&+&\sum_{n=1}^{m-1}\f{B_{2n}}{(2n)!}(\ln q)^{2n-1}\left(z\f{d}{dz}\right)^{2n-2}\f{z}{1-z}+(\ln q)^{2m-1} R_m(z,q),\qquad
\label{eq:asymptotics_q_product}
\ee
where $B_n$ is the $n$-th Bernoulli number and $\op{Li}_2(z)$ denotes the principal branch of the Euler dilogarithm (\href{http://dlmf.nist.gov/25.12}{$§$25.2} in \cite{NIST}), which for $z \in \mathbb{C}$ is defined as
\be 
\op{Li}_2(z) = -\int_0^z \f{\ln(1-w)}{w} dw,
\label{eq:def_dilog}
\ee
with the principal branch of the logarithm being taken in the integral. Here and in the following, we define $\op{Im}\big(\ln(x)\big)=i\pi$ for any real $x<0$. The remainder term in Eq.\eqref{eq:asymptotics_q_product} satisfies for $m\in \mathbb{N}$,
\be
\left|R_m(z,q)\right| \leq \f{2\,|B_{2m}|}{(2m!)}\int_0^{1}\left|\left(u\f{d}{du}\right)^{2m-1}\f{zu}{1-zu}\right|\f{du}{u}.
\label{eq:general_remainder}
\ee
For $m=1$, evaluating the integral on the rhs of Eq.\eqref{eq:general_remainder} gives the bound
\be 
\left|R_1(z,q)\right| \leq~\f{1}{6\,\sin(\varphi)}\left(\arctan\left(\f{|z|-\cos(\varphi)}{\sin(\varphi)}\right)-\psi\right),
\label{eq:remainder_estimation_R}
\ee 
where $\varphi = \arg(z)$ and $\psi = \varphi\pm \f{\pi}{2}$ for $\phi\lessgtr 0$. 

Using Eq.\eqref{eq:asymptotics_q_product}, we can rewrite Eq.\eqref{eq:exact_integral_formula} as
\BE
\phi(a,q^k t,q) = \f{A(a,q)}{2\pi i} \int_{C} \exp\left(\f 1\epsilon f(z)\right) \f{g_\epsilon(z)}{z^k} dz,
\label{eq:asymptotics_phi_with_remainder}
\EE
where $\epsilon=-\ln(q)$ and for $a\in \mathbb{C}$, $t\in \mathbb{C}\setminus\{0\}$ and $z\in \mathbb{C}\setminus\{0\}$ (resp.$\mathbb{C}\setminus\{0,a,1\}$),
\BE
	f(z)&=&\ln\left(t\right)\ln(z)+\Li_2(z)-\f{1}{2}\ln(z)^2+\Li_2\left(\f{a}{z}\right),
	\label{eq:f}\\	
	g_{\epsilon}(z)&=&\disp \left(\f{z^{2}}{(1-z)(z-a)}\right)^{1/2}\exp\Big[\epsilon \Big(R_1(z,q)+R_1\Big(\f{a}{z},q\Big)\Big)\Big].
	\label{eq:g}
\EE
Note that for $0<a<1$, $f(z)$ and $g_\epsilon(z)$ are real on the segment $a<z<1$, therefore in this case, $f(z^*)=f(z)^*$ and $g_\epsilon(z^*)=g_\epsilon(z)^*$, and $f(z)$ is analytic for $z\in\mathbb{C}\setminus\big(-\infty,a\big]\cup\big[1,\infty\big)$. In order to analyse the integral on the rhs of Eq.\eqref{eq:asymptotics_phi_with_remainder} by means of the saddle point method, we need to further analyse the function $f(z)$. We begin by discussing the location of its saddle points, depending on the values of $a$ and $t$.
\section{Location of the saddle points}
\label{section:location_of_the_saddle_points}

The saddle points of $f(z)$ are the zeros of the derivative
\be 
f^\prime(z) = \f{1}{z}\ln\left(\f{t\,(z-a)}{z^2(1-z)}\right),
\ee
which coincide with the zeros of the polynomial
\be 
s(z) = z^3-z^2+t\,z-t\,a.
\label{eq:saddle_point_equation_GDP}
\ee
Hence, $f(z)$ has (up to multiplicity) three saddle points $z_i~(i=1,2,3)$, which satisfy
\be 
\left. \begin{array}{rcl}
z_1+z_2+z_3 &=&1 \\
z_1z_2+z_2z_3+z_3z_1 &=& t \\
z_1z_2z_3 &=& ta
\end{array}\right\}.
\label{eq:equations_for_zis}
\ee
If the parameter $t$ takes one of the two values
\be 
t_c^{\pm} = \f{1}{8}\left(1+ 18 a-27 a^2 \pm (1-9a) \sqrt{9a^2-10a+1}\right),
\label{eq:critical_t_values_general_a}
\ee
then two saddle points coalesce in one of the points
\be 
z_c^\pm = \f{1}{4}\left(3 a + 1 \pm \sqrt{9 a^2-10 a+1}\right).
\label{eq:points_of_coalescence}
\ee

In Fig.\,\ref{fig:critical_t_values} we show the dependence of the two critical values $t_c^\pm$ as functions of $a$. Concerning the location of the saddle points $z_1,z_2$ and $z_3$ in the complex plane, we distinguish the following 5 cases.

\begin{figure}[htb]
  \centering
  \def\svgwidth{225pt}
  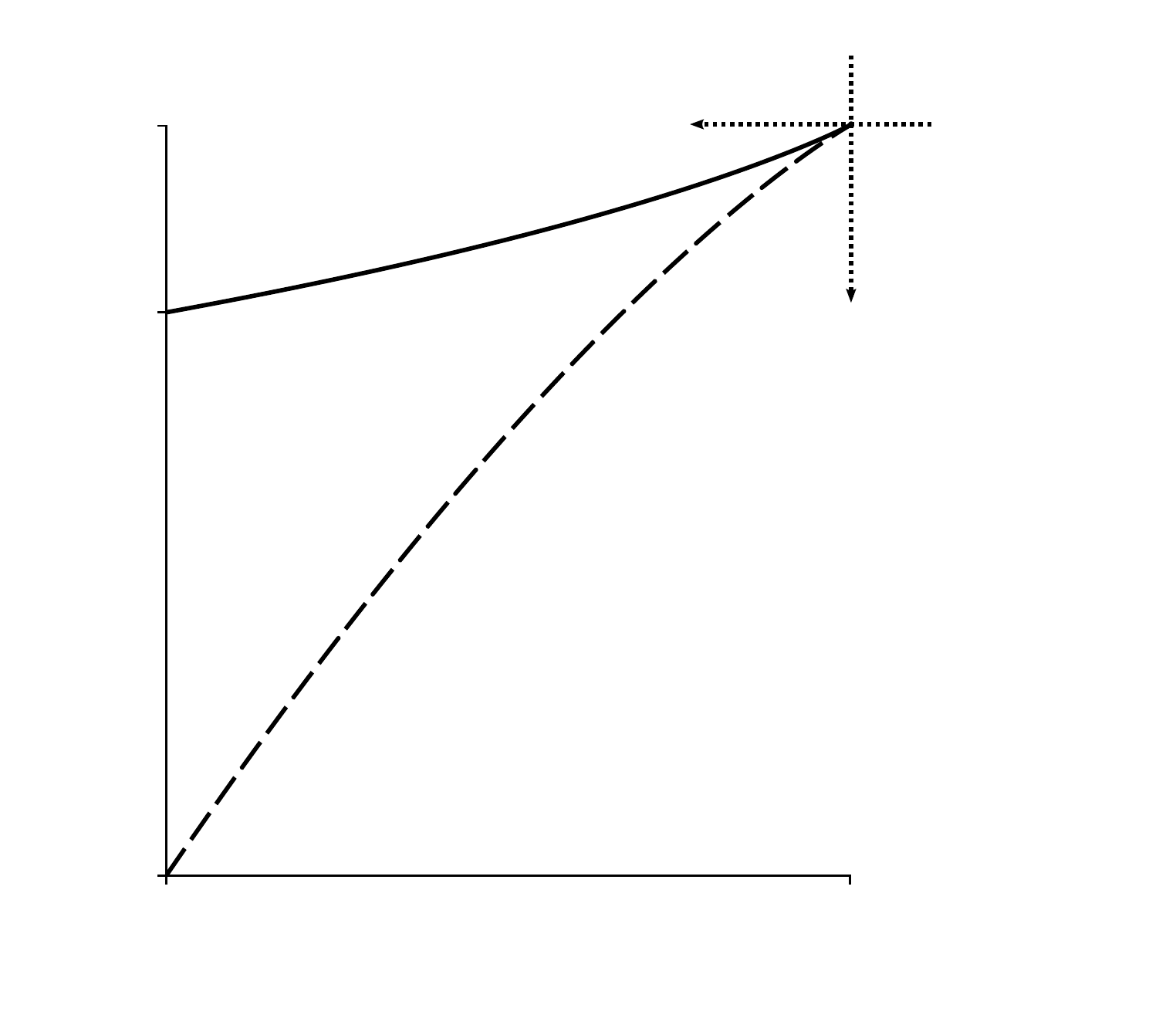
\caption{Plot of the critical values $t_c^{\pm}$ as functions of $a$. The picture also shows the orientation of the natural coordinates $\tau=1/3-t$ and $\delta=1/9-a$ which will be used later on.}
  \label{fig:critical_t_values}
\end{figure}
\begin{enumerate}
\item [(i)] If $a<0$, then two saddle points coalesce for $t=t_c^+<1/4$, while the third one is negative.
\item [(ii)] If $a=0$, then one saddle point is constantly zero while the other two coalesce in $z_c=1/2$ for $t=t_c^+=1/4$. 
\item [(iii)] If $0<a<1/9$, then two saddle points are mutually complex conjugates for $0<t<t_c^-<1/3$ and coalesce on the positive real line for $t=t_c^-$ in the point $z_c^-$, where $a<z_c^-<1/3$. For $t_c^-<t<t_c^+$, all three saddle points are real and for $t=t_c^+<1/3$, two saddle points coalesce in the point $1/3<z_c^+<1/2$. For $t>t_c^+$, again one saddle point is real and the other two are mutually complex conjugates. 
\item [(iv)] If $a=1/9$, then $t_c^-=t_c^+=1/3$, hence all three saddle points coalesce in the same point, $z_c^-=z_c^+=1/3$.
\item [(v)]If $a>1/9$, then there is no saddle point coalescence for $t>0$.
\end{enumerate}

\section{Geometry of the paths of steepest descent}
\label{section:geometry_of_paths_of_steepest_descent}

In this section we are going to discuss the geometry of the paths of steepest descent of $f(z)$ (see e.g. \cite{Flajolet09_8} for a general introduction to the method of steepest descents). To this purpose, we first state the following Lemma, the proof of which relies on basic relations for the Euler dilogarithm (\href{http://dlmf.nist.gov/25.12.E4}{Eq.(25.12.4)} in \cite{NIST}).
\begin{lemma}
For complex $a,t\neq 0$, $0<|\phi|<\pi$ and $\lambda\to 0^+$ or $\lambda \to \infty$,
\be 
f(\lambda\,e^{i\phi}) = -\ln(\lambda)^2 + b \ln(\lambda) - i\,\psi \ln(\lambda)+ \mathcal{O}(1),
\label{eq:asymptotics_f}
\ee
where $b$ denotes the principal branch value of $\ln(a t)$ and $\ln(t)$ with $\ln(-1)=i\pi$ for $\lambda\to 0^+$ and $\lambda\to \infty$, respectively, and $\psi=2\,\phi\mp \pi$ for $\phi\gtrless 0$. 
\label{lemma:asymptotics_f}
\end{lemma}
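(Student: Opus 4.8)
The plan is to substitute $z=\lambda e^{i\phi}$, so that $\ln(z)=\ln(\lambda)+i\phi$, and to track which terms of $f(z)$ generate powers of $\ln(\lambda)$ in each limit. The two elementary pieces are immediate: $\ln(t)\ln(z)=\ln(t)\ln(\lambda)+i\phi\ln(t)$ and $-\f12\ln(z)^2=-\f12\ln(\lambda)^2-i\phi\ln(\lambda)+\f12\phi^2$. The real work lies in the two dilogarithms. In each limit exactly one of the arguments $z$ and $a/z$ tends to $0$ while the other tends to $\infty$: for $\lambda\to 0^+$ one has $z\to 0$ and $a/z\to\infty$, and for $\lambda\to\infty$ one has $a/z\to 0$ and $z\to\infty$. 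The dilogarithm with argument tending to $0$ is $\mathcal{O}(1)$ (in fact $o(1)$) by the power series $\Li_2(w)=w+\mathcal{O}(w^2)$, and so only feeds into the error term.

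For the dilogarithm whose argument diverges I would use the inversion relation Eq.(25.12.4) in \cite{NIST}, namely $\Li_2(w)+\Li_2(1/w)=-\f{\pi^2}{6}-\f12\big(\ln(-w)\big)^2$, valid for $w\notin(0,1)$. Writing $w$ for the diverging argument, the term $\Li_2(1/w)$ is again $o(1)$, so the diverging dilogarithm reduces to $-\f12\big(\ln(-w)\big)^2$ up to a bounded constant. Since $\ln(-w)=\pm\ln(\lambda)+(\text{bounded})$ (with a plus sign when $w=z\to\infty$ and a minus sign when $w=a/z\to\infty$), squaring produces a $-\f12\ln(\lambda)^2$ term together with a linear term $\mp\ln(\lambda)$ multiplying the bounded part of $\ln(-w)$.

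Collecting the $\ln(\lambda)^2$ contributions then gives $-\f12-\f12=-1$, i.e. the stated leading term $-\ln(\lambda)^2$, in both limits. Gathering the coefficient of $\ln(\lambda)$ yields $\ln(t)-i\phi$ from the elementary pieces together with the contribution of the bounded part of $\ln(-w)$; these combine into $b-i\psi$, where $b=\ln(at)$ for $\lambda\to 0^+$ (there $w=a/z$, whose bounded part carries $\ln(-a)$, which merges with $\ln(t)$) and $b=\ln(t)$ for $\lambda\to\infty$ (there $w=z$ carries no factor of $a$). Every remaining contribution — the constants $-\pi^2/6$, the term $i\phi\ln(t)$, the term $\f12\phi^2$, the $o(1)$ dilogarithm, and the bounded squares $-\f12 B^2$ — is $\mathcal{O}(1)$.

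The main obstacle, and the only point requiring genuine care, is the determination of the imaginary shift $\mp\pi$ in $\psi$. This amounts to fixing the principal branch of $\ln(-w)$ as a function of $\arg(z)=\phi$: for $0<\phi<\pi$ the point $-z$ has argument $\phi-\pi$, whereas for $-\pi<\phi<0$ it has argument $\phi+\pi$, and it is precisely this that turns the bounded part of $\ln(-w)$ into the $\pm i\pi$ contribution and hence gives $\psi=2\phi\mp\pi$ for $\phi\gtrless0$. The same branch bookkeeping is what recombines $\ln(t)+\ln(-a)$ into $\ln(at)\pm i\pi$ in the $\lambda\to 0^+$ case; here one must keep $a$ and $t$ complex and check that the principal-branch identifications of $b$ quoted in the statement are the ones actually produced.
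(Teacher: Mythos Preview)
Your approach is correct and is precisely the one the paper indicates: the paper omits the proof entirely, pointing only to the dilogarithm inversion relation (\href{http://dlmf.nist.gov/25.12.E4}{Eq.(25.12.4)} in \cite{NIST}), which is exactly what you invoke to handle the diverging dilogarithm while the other one is $o(1)$. Your careful tracking of the principal-branch determination of $\ln(-w)$ as a function of the sign of $\phi$ is the point the paper leaves implicit, and you have identified it correctly as the source of the $\mp\pi$ in $\psi$.
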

Note also that for real $x$, real $t>0$ and $0<a<1$,
\be 
\op{Im}(f(x))=\left\{\ba{cl} 
\pi\ln(t/|x|)&\quad (x<0)\vspace{1mm}\\
\pi\ln(x/a)  &\quad(0<x\leq a)\vspace{1mm}\\
0 & \quad (a\leq x\leq 1)\vspace{1mm} \\
-\pi\ln(x)    & \quad(x\geq 1)
\ea.
\right.
\ee
From the sign of the imaginary part of $f^\prime(z)$, we can conclude that paths of steepest descent cannot end at the branch cut of the logarithm. 

Using Lemma \ref{lemma:asymptotics_f}, we prove
\begin{lemma}
For real $a\leq 1/9$~and~$0< t\leq t_c^+(a)$, there exists a continuous curve $c: \mathbb{R} \rightarrow \mathbb{C}$,  with $c(0) = z_{3}$ and $\op{Im}c(\lambda)\gtrless 0$ for $\lambda \gtrless 0$, such that
\be 
\op{Im} f\big(c(\lambda)\big) = 0
\label{eq:vanishing_Impart}
\ee
for $\lambda \in \mathbb{R}$, $\left|c(\lambda)\right| \to \infty$ for $\lambda \to \pm \infty$ and
\be 
\lim_{\lambda\to\pm\infty}\arg\big(c(\lambda)\big) = \pm\f{\pi}{2}.
\label{eq:asymptotics_path_of_steepest_desc}
\ee
\label{lemma:geometry_of_path_of_steepest_descent}
\end{lemma}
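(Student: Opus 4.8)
\noindent The plan is to realize the curve $c$ as the branch of the zero level set of $\op{Im} f$ that leaves the saddle point $z_3$ orthogonally to the real axis. Throughout I take $0<a\le 1/9$, so that $f$ is given by \eqref{eq:f} with the conventions fixed there, and I use the labelling of Section~\ref{section:location_of_the_saddle_points} in which $z_3$ is the largest real root of $s(z)$ in \eqref{eq:saddle_point_equation_GDP}. First I would record that $z_3$ is real with $a<z_3<1$, so that by the real-axis formula for $\op{Im}(f(x))$ stated after Lemma~\ref{lemma:asymptotics_f} we have $\op{Im} f(z_3)=0$, and that for $0<t<t_c^+$ it satisfies $z_c^+<z_3<1$ with $z_c^\pm$ as in \eqref{eq:points_of_coalescence}. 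The key local computation is the sign of $f''(z_3)$. Writing $f'(z)=\tfrac1z g(z)$ with $g(z)=\ln\!\big(t(z-a)\big)-\ln\!\big(z^2(1-z)\big)$ and using $g(z_3)=0$, one gets $f''(z_3)=g'(z_3)/z_3$ and, after clearing denominators,
\[
g'(z_3)=\frac{N(z_3)}{z_3(z_3-a)(1-z_3)},\qquad N(z):=2z^2-(1+3a)z+2a .
\]
Since $N$ has discriminant $9a^2-10a+1$ and its product and sum of roots coincide with those of $z_c^\pm$, one has $N(z)=2(z-z_c^-)(z-z_c^+)$. For $z_c^+<z_3<1$ every factor of $g'(z_3)$ is positive, hence $f''(z_3)>0$. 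Consequently the two steepest-descent directions at $z_3$ are vertical and the level set $\{\op{Im} f=0\}$ leaves $z_3$ orthogonally to the real axis; this gives the germ of $c$ near $\lambda=0$ with $\op{Im} c(\lambda)\gtrless0$ for $\lambda\gtrless0$, the lower branch being the conjugate of the upper one since $f(z^*)=f(z)^*$.

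\noindent Next I would pin down the only admissible way for a zero level curve to escape to infinity. By Lemma~\ref{lemma:asymptotics_f}, for $0<\phi<\pi$ one has $\op{Im} f(\lambda e^{i\phi})=-(2\phi-\pi)\ln\lambda+\ord(1)$ as $\lambda\to\infty$, so $\op{Im} f\to+\infty$ when $0<\phi<\tfrac\pi2$ and $\op{Im} f\to-\infty$ when $\tfrac\pi2<\phi<\pi$; likewise $\op{Re} f\to-\infty$ in every direction, the leading term being $-\ln(\lambda)^2$. Hence any arc of $\{\op{Im} f=0\}$ that leaves every bounded set in the upper half plane must do so with $\arg\to\tfrac\pi2$, which is exactly the behaviour \eqref{eq:asymptotics_path_of_steepest_desc} demanded of $c$. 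It remains to show that the branch started at $z_3$ is such an arc.

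\noindent To control the global behaviour I would work with the boundary values of $V:=\op{Im} f$ on the real axis, approached from the upper half plane $\Omega=\{\op{Im} z>0\}$, on which $V$ is harmonic (the branch cuts lying on the real axis). Away from $(1,\infty)$ these agree with the formula quoted after Lemma~\ref{lemma:asymptotics_f}, giving $V<0$ on $(0,a)$ and $V=0$ on $(a,1)$; on $(1,\infty)$ the boundary value from $\Omega$ is $+\pi\ln x>0$, the sign being opposite to the principal value recorded there because of the jump of $\Li_2$ across $[1,\infty)$ (and consistent with $\phi\to0^+$ in Lemma~\ref{lemma:asymptotics_f}). The sign of $V$ immediately above $(a,1)$ follows from the Cauchy--Riemann relation $V_y=\op{Re} f'$: since $\op{Re} f'(x)=\tfrac1x\ln\big(t(x-a)/(x^2(1-x))\big)$ has the sign of $s(x)$ there, and $z_3$ is the largest real root of $s$, we get $V>0$ just above $(z_3,1)$ and $V<0$ just above the subinterval of $(a,1)$ immediately to the left of $z_3$. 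Let $W^+$ be the connected component of $\{V>0\}\cap\Omega$ adjacent to $(z_3,1)$. By the far-field analysis $W^+$ is unbounded and contains the sector $\{0<\phi<\tfrac\pi2-\eta,\ |z|>R\}$, while the minimum principle for the harmonic $V$ forbids any component of $\{V<0\}$ or $\{V>0\}$ compactly contained in $\Omega$. Therefore the part of $\partial W^+$ lying in $\Omega$ is a single zero level arc: on the real axis $W^+$ reaches exactly down to $z_3$ (where $V$ changes sign), and at infinity it is bordered by $\phi=\tfrac\pi2$. This arc is the upper half of $c$; reflecting in the real axis gives the lower half and completes the construction of $c:\IR\to\mathbb C$.

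\noindent The hard part will be the global step just sketched, namely excluding that the branch from $z_3$ turns back to the real axis before escaping, either into one of the other real saddles $z_1,z_2$ (a saddle connection) or towards a branch point. I would dispose of this through the same mechanism: a returning arc would bound, together with a segment of the real axis, a region on which the harmonic $V$ has constant sign but vanishes on its whole boundary, which the maximum/minimum principle rules out; and the observation after the real-axis formula, that paths of steepest descent cannot terminate on the branch cuts, excludes the branch-point scenario. Finally, the endpoint $t=t_c^+$ needs separate treatment, since there $z_2$ and $z_3$ coalesce at $z_c^+$, $f''(z_c^+)=0$, and $f(z)-f(z_c^+)\sim\tfrac16 f'''(z_c^+)(z-z_c^+)^3$; the local model then has six zero-level rays at angles $\tfrac{k\pi}3$, and I would select the conjugate pair entering the upper and lower half planes and rerun the global argument, so that $c$ still emanates from $z_c^+$ with $\op{Im} c(\lambda)\gtrless0$ and the same asymptotics. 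For $a\le0$ the real-saddle picture of Section~\ref{section:location_of_the_saddle_points} is simpler and the same scheme applies.
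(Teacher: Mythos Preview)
Your strategy is sound and, as far as I can see, would close to a correct proof once the global step is made precise; but it is genuinely different from the argument in the paper. The paper does not compute $f''(z_3)$ or invoke the maximum principle at all. Instead it argues by elimination over \emph{all} saddles: from Lemma~\ref{lemma:asymptotics_f} a steepest-descent path can only terminate at $z=0$ or at $\infty e^{\pm i\pi/2}$, and then, splitting into the two regimes $0<t<t_c^-(a)$ (where $z_1\in\Omega$ and its two descent paths occupy $z=0$ and $\infty$) and $t_c^-(a)\le t\le t_c^+(a)$ (where all saddles are real, with the descent path from $z_1$ going to $0$ and the ascent path from $z_2$ going to $z=-t$), the non-crossing of steepest paths forces the descent path from $z_3$ to end at $\infty e^{i\pi/2}$. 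Your route trades this ``dynamical'' picture for a potential-theoretic one: you read off the boundary values of $V=\op{Im}f$ from $\Omega$, use Cauchy--Riemann to get the sign of $V$ just above $(a,1)$, and let the maximum principle do the global work. Each approach buys something: the paper's is shorter and directly identifies the curve as a steepest-descent contour (which is what is actually used later in Eq.~\eqref{eq:transformation_integral_a<1/9}); yours is more explicit locally, correctly identifies the boundary value on $(1,\infty)$ from above as $+\pi\ln x$ (opposite in sign to the principal value recorded after Lemma~\ref{lemma:asymptotics_f}), and does not require tracking the paths emanating from $z_1$ and $z_2$.

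Two points where your sketch needs more care to be complete. First, the assertion that the component $W^+$ of $\{V>0\}$ adjacent to $(z_3,1)$ coincides with the far-field component in the sector $0<\phi<\tfrac{\pi}{2}$ is exactly the global statement you are after; as written it is asserted, not proved. You will need the full sign picture of the boundary values from $\Omega$ on \emph{all} of the real axis (in particular $V(x+i0)=\pi\ln(t/|x|)$ on $(-\infty,0)$, vanishing at $x=-t$) together with the maximum principle to rule out a separate unbounded positive component, and to exclude that the level arc from $z_3$ lands on one of the other boundary zeros of $V$ (at $x=a$ or $x=-t$). Second, in the regime $0<t<t_c^-(a)$ there is a genuine saddle $z_1\in\Omega$, so the zero level set of $V$ can branch there; your maximum-principle clause covers a return to $[a,1]$, but you should also argue that $\op{Im}f(z_1)\neq 0$ in general or otherwise handle the possible saddle connection explicitly. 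The paper sidesteps both issues by its non-crossing argument.
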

\begin{proof}
Assume $a\leq 1/9$~and~$0< t\leq t_c^+(a)$. According to the discussion above, we label the saddle points in such a way that $z_1$ and $z_2$ are mutually complex conjugates for $t<t_c^-(a)$ with $\op{Im}(z_1)>0$ while $z_3$ is real, and $z_2$ coalesces with $z_3$ for $t=t_c^+(a)$. From the asymptotic behaviour of $f(z)$ stated in Lemma \ref{lemma:asymptotics_f}, we can conclude that paths of steepest descent can only end in $z=0$ or at $\infty\exp(\pm i\pi/2)$. Since $f(z^*)=f(z)^*$, it is sufficient to consider the upper half-plane. There are two cases to be distinguished.

\begin{enumerate}
\item \underline{$0 < t < t_c^-(a)$}. In this case, $z_3$ is real while $\op{Im}(z_1)>0$. One of the two paths of steepest descents originating from $z_1$ ends in $z=0$, while the other one ends at infinity. Since paths of steepest descent can only cross in saddle points, it follows that the path of steepest descent emerging from $z_3$ necessarily ends at $\infty\exp(i\pi/2)$. Figure \ref{fig:paths_of_steepest_descent_and_ascent}\,(a) shows an example for this case.

\item \underline{$t_c^-(a) \leq t \leq t_c^+(a)$}. In this case, all three saddle points are real. The path of steepest descent originating from $z_1$ necessarily ends at zero, while the path of steepest ascent originating from $z_2$ ends in the point $z=-t$. Again it follows that the path of steepest descent originating from $z_3$ ends at $\infty\exp(i\pi/2)$. Figure \ref{fig:paths_of_steepest_descent_and_ascent}\,(b) shows an example for $t_c^-(a) < t < t_c^+(a)$ for $a<1/9$ and (c) shows the special case $a=1/9$, for which the three saddle points coalesce.

\end{enumerate}
Note that for $a<0$, $t_c^-(a)<0$, and therefore only the second case is relevant.

 For $0 < t\leq t_c^+(a)$, $\op{Im} f(z_3) = 0$. Since the paths of steepest descent are the contours on which the imaginary part of $f(z)$ is constant, the union of the two paths of steepest originating from $z_3$ and ending at $\infty\exp(\pm i\pi/2)$ has the properties of the curve $c(\lambda)$.
\end{proof}
\begin{figure}[htp]
	\begin{center}	     
\subfigure[$a=0.11$, $t=0.31$]{\def\svgwidth{175pt}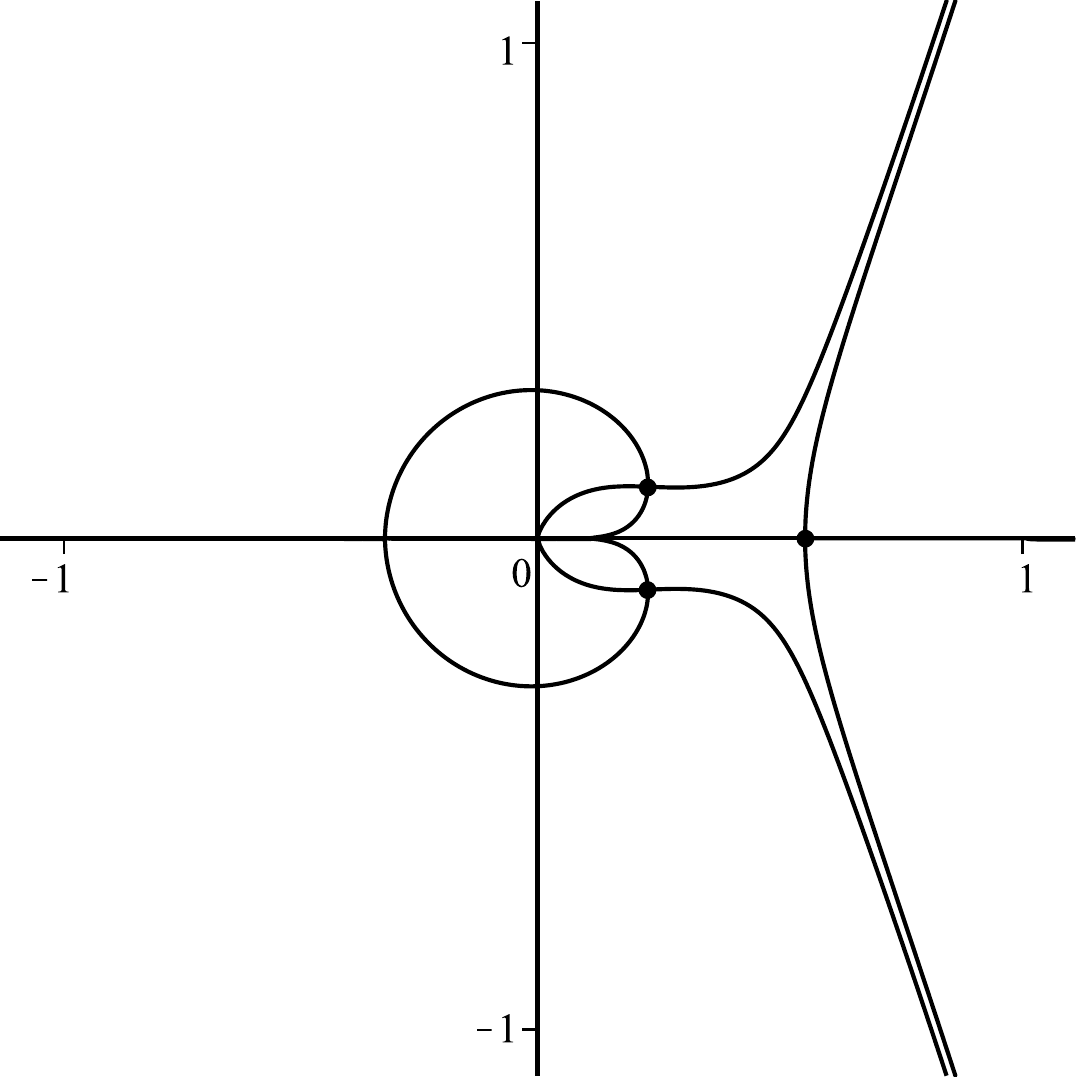}
\hspace{0.5cm}  
\subfigure[$a=0.11$, $t=0.3317$]{\def\svgwidth{175pt}
  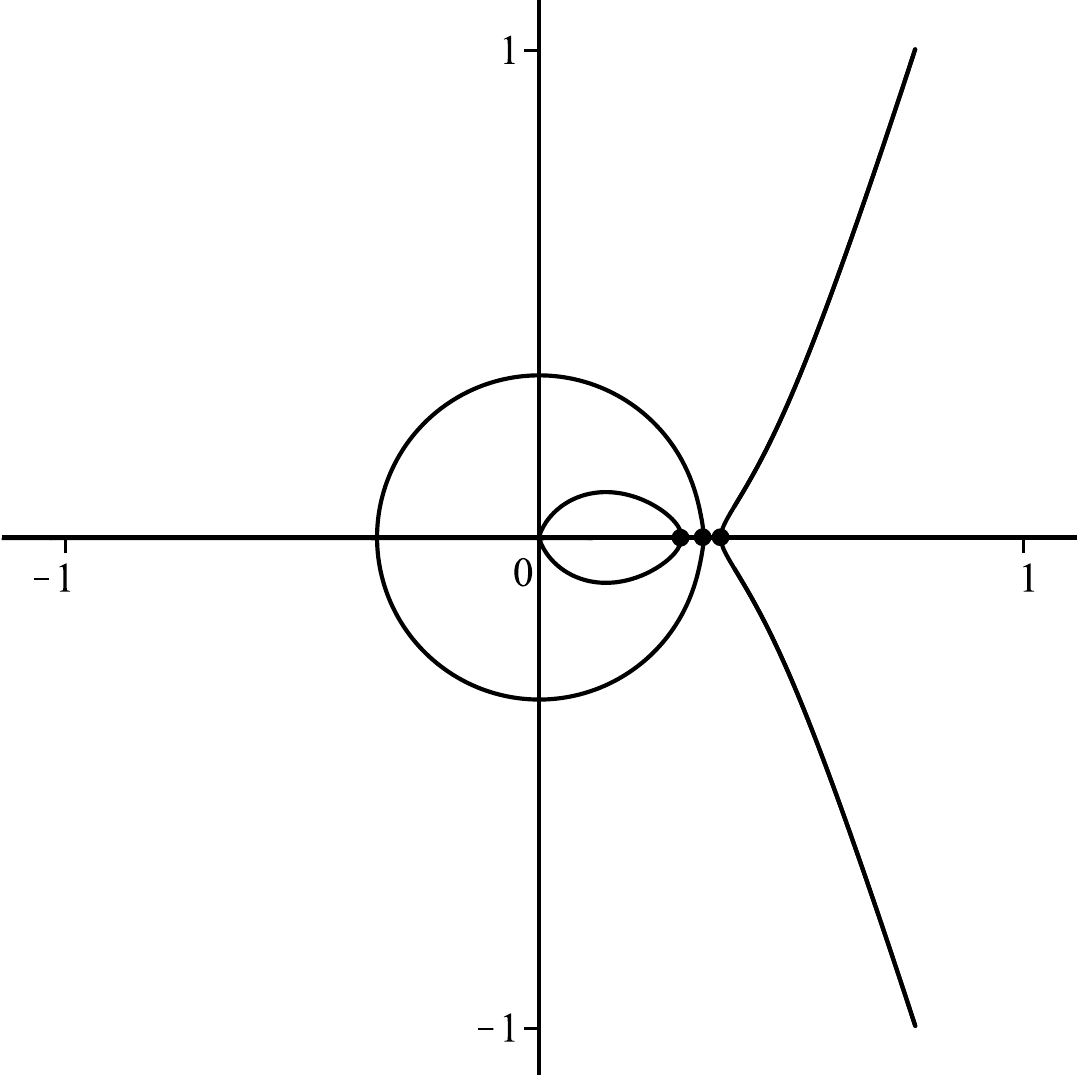}
\hspace{0.5cm}  
\subfigure[$a=1/9$, $t=1/3$]{\def\svgwidth{175pt}
  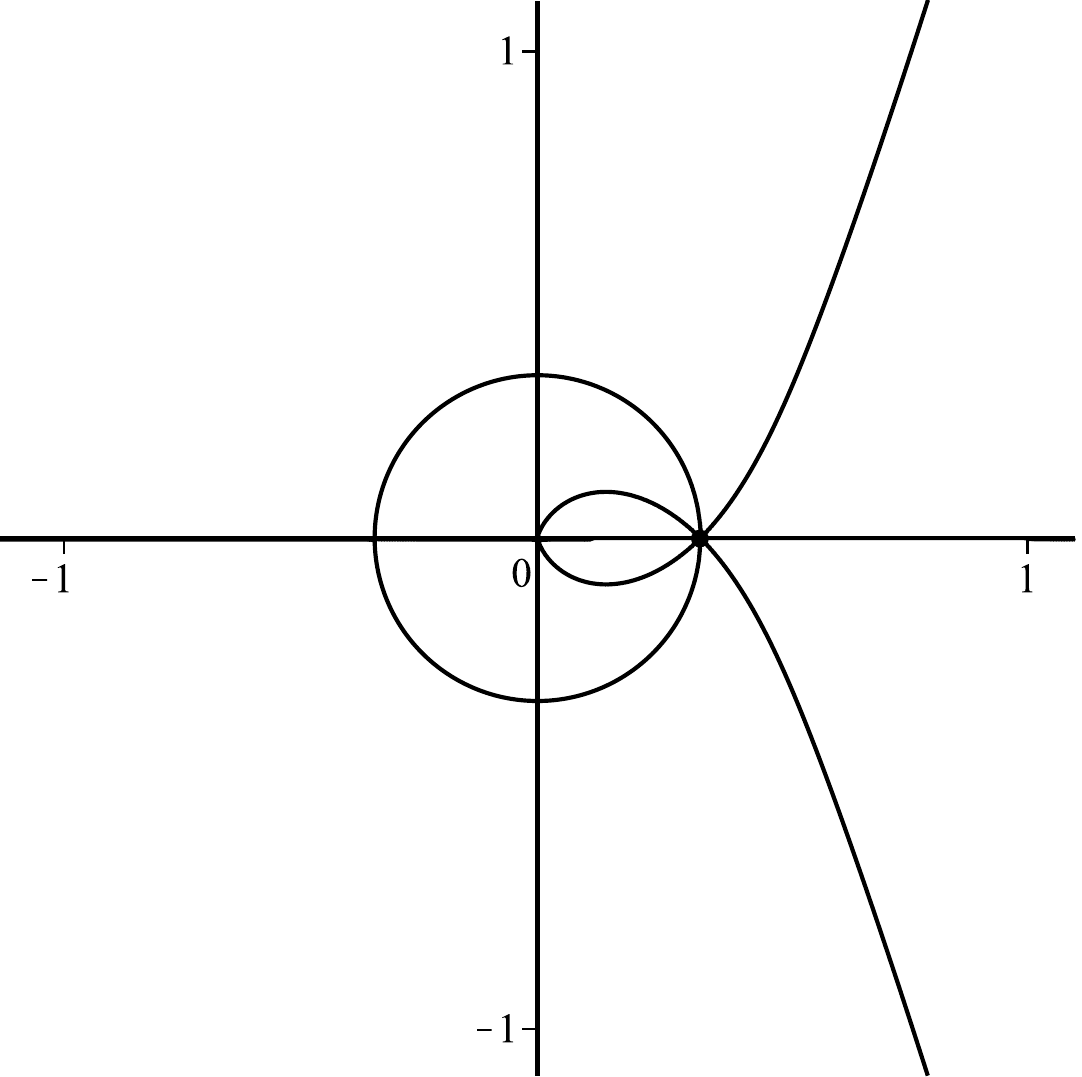}
\hspace{0.5cm}  
\subfigure[$a=0.11$, $t=0.34$]{\def\svgwidth{175pt}
  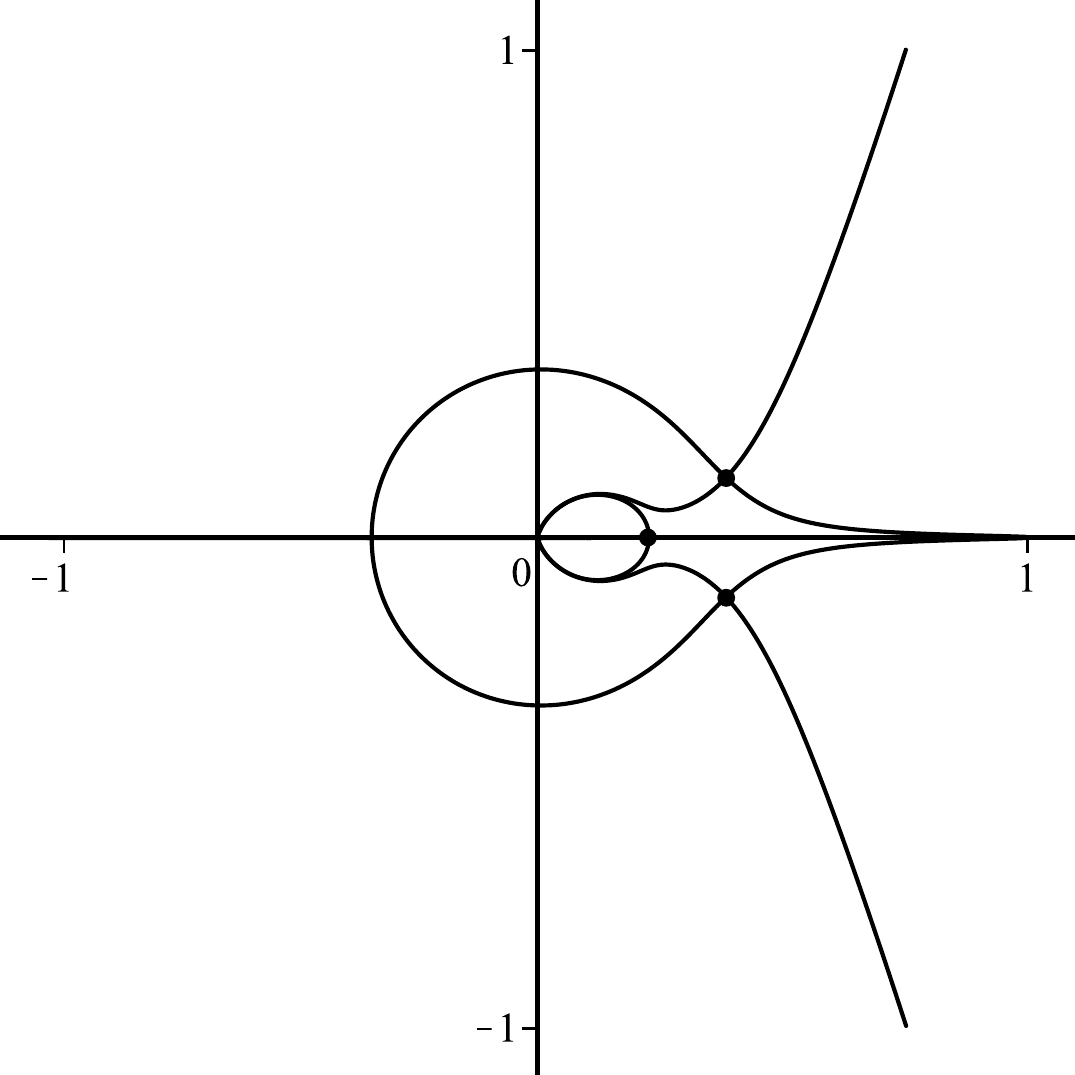}
	\end{center}
\caption{The saddle points of the function $f(z)$ defined in Eq.\eqref{eq:f} (marked by small black dots) and the paths of steepest descent and ascent originating from them.}
	\label{fig:paths_of_steepest_descent_and_ascent}
\end{figure}

\section{Transformation of $f(z)$ into a canonical form}
\label{section:canonicaltrafo}
As we discussed in Section \ref{section:location_of_the_saddle_points}, for $a=1/9$, the three saddle points of $f(z)$ coalesce in the point $z_c=1/3$ for $t=t_c^+(1/9)=1/3$. We now define the natural coordinates
\be 
\tau = \f13-t~~\text{and}~~\delta = \f19-a,
\ee
and consider $f(z)$ and $g_\epsilon(z)$ as functions of $z$, $\tau$ and $\delta$ from now on.

Theorem 1 from \cite{Ursell72} states that there is a map $z \mapsto u(z),$ such that
\be 
f(z) = \f{1}{4}\,u^4 - \alpha\,u^2 - \beta\,u + \gamma = p(u),
\label{eq:transformation_formula}
\ee
which is regular and bijective if $z-z_c$, $\tau$ and $\delta$ are sufficiently close to zero. The coefficients $\alpha$, $\beta$ and $\gamma$ are regular functions of $\tau$ and $\delta$. Thus,
\be
\alpha = \sum_{j,k=0}^\infty \alpha_{j,k} \tau^j \delta^k \quad ;\quad \beta = \sum_{j,k=0}^\infty \beta_{j,k}\tau^j \delta^k \quad ;\quad 
\gamma = \sum_{j,k=0}^\infty \gamma_{j,k} \tau^j \delta^k.
\label{eq:expansions_alphabetagamma}
\ee
Note that $\alpha$ and $\beta$ are not unique, since the form of the rhs of Eq.\eqref{eq:transformation_formula} is invariant under rotations about the angle $k \pi/2$, where $k\in\mathbb{Z}$. 

We denote the three saddle points of the polynomial $p(u)$ by $u_1,u_2$ and $u_3$, hence
\be
p'(u) = u^3-2\alpha u -\beta = (u-u_1)(u-u_2)(u-u_3).
\label{eq:derivative_p}
\ee
From Eq.\eqref{eq:derivative_p} it follows that
\begin{subequations}
\be
u_1+u_2+u_3 &=&0, \label{eq:basic_eqn_us1}\\
u_1u_2+u_2u_3+u_3u_1 + 2\alpha &=& 0,\label{eq:basic_eqn_us2}\\
u_1u_2u_3 -\beta &=& 0.\label{eq:basic_eqn_us3}
\ee
\label{eq:basic_eqns_us}
\end{subequations}
As a necessary condition for the transformation defined in Eq.\eqref{eq:transformation_formula} to be regular, the saddle points of of $f(z)$ need to be mapped onto the saddle points of $p(u)$. For $\tau=\delta=0$, the three saddle points of $f(z)$ coalesce and it follows from Eq.\eqref{eq:basic_eqn_us1} that $u_1=u_2=u_3=0$. With this we obtain from Eqs.(\ref{eq:basic_eqns_us}b-c) that $\alpha_{0,0}=\beta_{0,0}=0$. 

If we label the saddle points of $p(u)$ such that $u(z_j) = u_{j}$ for $j=1,2$ and $3$, then from Eq.\eqref{eq:transformation_formula} it follows by differentiating twice that

\be 
\left.\f{dz}{du}\right|_{u=u_i} = \left(\f{3 u_i^2-2\alpha}{f^{\prime\prime}(z_i)}\right)^{1/2},
\label{eq:Jacobian}
\ee
for $(\tau,\delta)\neq(0,0)$. For $\tau=\delta=0$, we get by taking higher derivatives that
\be 
\left.\f{dz}{du}\right|_{u=0}=\f{2^{1/4}}{3}~\text{,}~~\left.\f{d^2 z}{du^2}\right|_{u=0}=\f{2^{1/2}}{3}~~\text{and}~~\left.\f{d^3 z}{du^3}\right|_{u=0}=\f{2^{3/4}}{6}.
\label{eq:Jacobian1}
\ee
Since $u_j^3=2\alpha u_j+\beta$ for $j=1,2$ and 3, we have
\be 
f(z_j) = -\f{1}{2}\,\alpha\,u_{j}^2-\f{3}{4}\,\beta\,u_{j} +\gamma.
\label{eq:fatsaddles}
\ee
Using Eqs.\eqref{eq:basic_eqns_us} and Eq.\eqref{eq:fatsaddles}, we obtain the following set of equations, where $\Sigma^{(k)}=\sum_{j=1}^3 f(z_j)^k$ for $k=1,2$ and 3.
 \def\thesubequation{\arabic{equation}\roman{subequation}}
\begin{subequations}
\be 
3\gamma-2\alpha^2&=&\Sigma^{(1)},\quad\label{eq:fis_eq1}\\ 
3\gamma^2+\f{9}{2}\alpha\beta^2-4\alpha^2\gamma+2\alpha^4&=&\Sigma^{(2)},\quad\label{eq:fis_eq2}\\
3\gamma^3-\f{81}{64}\beta^4+\f{27}{2}\alpha\beta^2\gamma-6\alpha^2\gamma^2-\f{51}{4}\,\alpha^3\beta^2+6\,\alpha^4\gamma-2\alpha^6&=&\Sigma^{(3)}.\quad\label{eq:fis_eq3}
\ee
\label{eq:fis_eqs}
\end{subequations}
From Eqs.\eqref{eq:fis_eqs}, we derive
\begin{subequations}
\be 
2\alpha^4+\f{27}{2}\alpha\beta^2 &=& 3 \Sigma^{(2)}-\Big(\Sigma^{(1)}\Big)^2,\label{eq:fis_eq_derived1}\\
\alpha^6-\f{135}{8}\alpha^3\beta^2-\f{729}{128}\beta^4 &=& \Big(\Sigma^{(1)}\Big)^3+\f{9}{2}\Big(\Sigma^{(3)}-\Sigma^{(1)}\Sigma^{(2)}\Big)\label{eq:fis_eq_derived2}.
\ee
\label{eq:fis_eqs_derived}
\end{subequations}
With Eqs.\eqref{eq:fis_eqs} and \eqref{eq:fis_eqs_derived} we are now going to calculate the leading coefficients of the power series expansions \eqref{eq:expansions_alphabetagamma}. We will begin by considering the cases $\delta=0$ and $\tau=0$ separately.

\subsection{Coefficient asymptotics for $\tau\to 0$ and $\delta=0$}

From the above discussion we know that for $\delta=0$ and $\tau\to 0$, $\alpha\sim\alpha_{r_\alpha,0}\tau^{r_\alpha}$ and $\beta \sim\beta_{r_\beta,0}\tau^{r_\beta}$, where $r_\alpha,r_\beta \in \mathbb{N}$ and $\alpha_{r_\alpha,0},\beta_{r_\beta,0}\neq 0$.

To determine $r_\beta$, we take the third derivative of Eq.\eqref{eq:transformation_formula} with respect to $u$ and insert the saddle point values. This gives us
\be 
f^{\prime\prime\prime}(z_j)\left(\left.\f{dz}{du}\right|_{u_j}\right)^3+3f^{\prime\prime}(z_j)\left.\f{dz}{du}\right|_{u_j}\left.\f{d^2 z}{du^2}\right|_{u_j} = 6\, u_j
\label{eq:equation_for_ui}
\ee
for $j=1,2$ and $3$. Expanding both $f^{\prime\prime}(z_j)$ and  $f^{\prime\prime\prime}(z_j)$ as series in $\tau$ shows that for $\tau\to 0^+$, $f^{\prime\prime}(z_j) = o(\tau^{1/3})$. Moreover, for $k=1,2$ and 3,
\be 
f^{\prime\prime\prime}(z_k)= c_0\,\exp\left(\f{2 k\pi i}{3}\right)\,\tau^{1/3} + \mathcal{O}(\tau^{2/3}),
\ee
where $c_0=81\cd6^{1/3}$ and from this it follows together with \eqref{eq:Jacobian1} that 
\be 
u_k = u_0\exp\left(\f{2 k\pi i}{3}\right)\,\tau^{1/3} + \mathcal{O}(\tau^{2/3}),
\label{eq:asymptotics_ui}
\ee
where $u_0=6^{1/3}/2^{1/4}$. From Eq.\eqref{eq:basic_eqn_us3} we therefore conclude that $r_\beta=1$. With this we can now determine $r_\alpha$. From Eq.\eqref{eq:fis_eq_derived1}, we obtain for $\delta=0$ and $\tau\to 0$
\be
2\,\alpha^4+\f{27}{2}\alpha \beta^2 = \f{6561}{320}\tau^4 + \mathcal{O}(\tau^5).
\label{eq:fis_eq5}
\ee
It follows by the following dominant balance argument that $r_\alpha=2$. There are three possibilities to be distinguished. The first possibility is that $\alpha^4=o(\alpha\beta^2)$, from which it would follow that $r_\alpha+2 =4$, hence $r_\alpha=2$. The second possibility is that $\alpha\beta^2=o(\alpha^4)$, from which it would follow that $4\,r_\alpha=4$, hence $r_\alpha=1$. However, this would mean that $r_\alpha+2=3$, which stands in contradiction to the assumption that $\alpha\beta^2=o(\alpha^4)$. The third possibility is that the leading terms of $\alpha^4$ and $\alpha\beta^2$ cancel each other. In that case, $4\,r_\alpha=r_\alpha+2$, hence $r_\alpha=2/3$, which is impossible. We conclude that $r_\alpha=2$.

Expanding the rhs of Eq.\eqref{eq:fis_eq1} for $\delta=0$ in $\tau$, we get
\be 
\gamma_{0,0}=2\Li_2\left(\f13\right)+\f12\ln(3)^2,
\label{eq:gamma_0}
\ee
and using Eq.\eqref{eq:fis_eq5} and expanding Eq.(\ref{eq:fis_eqs_derived}b) in $\tau$ for $\delta=0$, we obtain $\alpha_{1,0}\beta_{1,0}^2$ and $\beta_{1,0}^4$. Chosing the real positive root for $\beta_{1,0}$, we arrive at
\be 
\alpha_{2,0} = \f{27\sqrt{2}}{320}~~\text{and}~~\beta_{1,0} = 3\sqrt[4]{2}.
\label{eq:lowest_coeffs_alphabeta_tau}
\ee
One can easily calculate further expansion coefficients, but here we will only give the results for the leading orders.
\subsection{Coefficient asymptotics for $\delta\to 0$ and $\tau=0$}
It follows from an argument analogous to the one given in the previous subsection that for $\tau=0$ and $\delta\to 0$, $\alpha\sim\alpha_{0,1}\delta$ and $\beta\sim \beta_{0,1} \delta$, where $\alpha_{0,1},\beta_{0,1}\neq 0$.

Again using Eqs.(\ref{eq:fis_eqs_derived}a-b), we obtain the values for $\alpha_{0,1}\beta_{0,1}^2$ and $\beta_{0,1}^4$. Since in the previous subsection we have chosen the positive root for $\beta_{1,0}$, we need to make sure to chose the correct root for $\beta_{0,1}$. Setting $\delta=-\tau$ and expanding the rhs of Eq.\eqref{eq:fis_eq_derived2} in $\tau$, we get $\beta_{1,0}-\beta_{0,1} = (15/2)\sqrt[4]{2}$. From this it follows that we need to choose the real negative root for $\beta_{0,1}$. We obtain
\be 
\alpha_{0,1} = \f{27\sqrt{2}}{8}~~\text{and}~~\beta_{0,1} = -\f{9}{2} \sqrt[4]{2}.
\label{eq:lowest_coeffs_alphabeta_delta}
\ee
Combining Eqs.\eqref{eq:lowest_coeffs_alphabeta_tau} and \eqref{eq:lowest_coeffs_alphabeta_delta}, we get that for $(\tau,\delta)\to(0,0)$,
\be 
\alpha \sim  \f{27\sqrt{2}}{8}\left(\delta + \f{1}{40} \tau^2 \right)~~\text{and}~~\beta  \sim  3 \sqrt[4]{2}\left(\tau - \f{3}{2} \delta \right).
\label{eq:double_series_alpha_beta}
\ee
\section{Asymptotics of $\phi(a,q^k t,q)$}
\label{section:asyofphi}

It follows from Lemma \ref{lemma:geometry_of_path_of_steepest_descent} together with Cauchy's theorem that for $0<t\leq 1/3$ and $a\leq 1/9$, we can replace the integration contour in Eq.\eqref{eq:asymptotics_phi_with_remainder} by a contour $C_0$ originating from $\infty\exp(-i\f{\pi}{2})$, passing through the real valued saddle point $z_3$ of $f(z)$ and ending at $\infty\exp(i\f{\pi}{2})$, such that $\op{Im} f(z)=0$ on this contour and $\op{Re}f(z)$ is maximal at $z_3$. 

The correction due to restricting $C_0$ to the central part $C_0'$ on which the transformation defined in Eq.\eqref{eq:transformation_formula} is regular decays exponentially in the limit $\epsilon\to 0^+$. The segment $u(C_0^\prime)$ is the central part of the contour given by the union of the two paths of steepest descent of $p(u)$ ending at $\infty\exp(\pm i\pi/4)$. Extending the integration to the complete contour, we obtain
\be 
\phi(a,q^k t,q) = \f{A(a,q)}{2\pi i}\int_{c_-\infty}^{c_+\infty} \exp\left(\f{1}{\epsilon} p(u)\right) G^{(k)}(u)\Big(1+\mathcal{O}(\epsilon)\Big) du,
\label{eq:transformation_integral_a<1/9}
\ee
where $c_{\pm}=\exp\left(\pm i\f{\pi}{4}\right)$, and 
\be 
G^{(k)}(u) = \f{g_0(z(u))}{z(u)^k}\,\f{dz}{du}.
\label{eq:g_times_Jacobian}
\ee

In order to calculate the leading asymptotic contribution to $\phi(a,q^k t,q)$, we use the ansatz \cite{Ursell72} 
\be 
G^{(k)}(u)=P^{(k)} + u Q^{(k)} + u^2 R^{(k)} + (u^3-2\alpha u-\beta) S^{(k)}(u),
\label{eq:expansion_Ansatz_G}
\ee
where $S^{(k)}(u)$ is an analytic function of $u, \tau$ and $\delta$ and  $P^{(k)},~Q^{(k)}$ and $R^{(k)}$ are analytic functions of $\tau$ and $\delta$. Inserting the saddle point values into Eq.\eqref{eq:expansion_Ansatz_G}, we get for $j=1,2$ and $3$,
\be 
G^{(k)}(u_j)&=&P^{(k)} + u_j Q^{(k)} + u_j^2 R^{(k)}.
\label{eq:G_uis}
\ee
Evaluating Eq.\eqref{eq:expansion_Ansatz_G} and the first and second derivative with respect to $u$ thereof at $u=0$ for $\tau=\delta=0$ gives together with \eqref{eq:Jacobian1} for $\tau=\delta=0$,
\be 
\ba{ccccccc}
P^{(0)}&=&\disp\f{2^{1/4}\sqrt{3}}{6},\quad Q^{(0)}&=&\disp\f{\sqrt{6}}{4}, \quad R^{(0)}&=&\disp \f{5\,2^{3/4}\sqrt{3}}{24}, \vspace{2.5mm}\\
P^{(1)}&=&\disp\f{2^{1/4}\sqrt{3}}{2},\quad Q^{(1)}&=&\disp\f{\sqrt{6}}{4}, \quad R^{(1)}&=&\disp \f{2^{3/4}\sqrt{3}}{8}. \\
\ea
\ee

From Eq.\eqref{eq:generalized_Airy} we define the functions
\be 
\Theta_4^{(1)}(x,y) = \f{\pt}{\pt x}\Theta_4(x,y)~~\text{and}~~\Theta_4^{(2)}(x,y) = \f{\pt}{\pt y}\Theta_4(x,y).
\ee
Note that $\Theta_4(x,y)$ is related to the Pearcey integral
\be 
\mathcal{P}(x,y) = 2\exp\left(\f{i\pi}{8}\right)\int_{0}^\infty \exp\left(-u^4-y u^2\right)\cos(x u) du,
\ee
the asymptotics of which has been studied in \cite{Paris91}, via the formula
\be 
\mathcal{P}(x,y) = \f{\sqrt{2}\,\pi}{\exp\left(i\f{\pi}{8}\right)} \left[\Theta_4\left(\f{1-i}{2}x,\f{iy}{2}\right)+i\,\Theta_4\left(\f{1+i}{2}x,\f{-iy}{2}\right)\right].
\ee
Applying Theorem 2 from \cite{Ursell72}, we can now formulate the following result.
\begin{proposition}
For $k\in\mathbb{Z}_{\geq 0}$, there exist constants $d_a,d_t>0$ such that for $a \in \big[\f{1}{9}-d_a,\f{1}{9}+d_a\big]$ and $t\in \big[\f{1}{3}-d_t,\f{1}{3}+d_t\big]$ and $q=e^{-\epsilon}\to 1^-$, we have 
\begin{multline}
\phi(a,q^k t,q) = A(a,q)\,\exp\left(\f\gamma\epsilon\right)\,\Bigl[P^{(k)}\,\epsilon^{1/4}\,\Theta_4\left(\f{\beta}{\epsilon^{3/4}},\f{\alpha}{\epsilon^{1/2}}\right)-\\
 - Q^{(k)}\,\epsilon^{1/2} \Theta_4^{(1)}\left(\f{\beta}{\epsilon^{3/4}},\f{\alpha}{\epsilon^{1/2}}\right)- R^{(k)}\, \epsilon^{3/4}\,\Theta_4^{(2)}\left(\f{\beta}{\epsilon^{3/4}},\f{\alpha}{\epsilon^{1/2}}\right)\Bigl]\big(1+\mathcal{O}(\epsilon)\big),
\label{eq:asymptotics_1_phi_2_a,t<1}
\end{multline}
uniformly, where the coefficients $\alpha,\beta,\gamma$ and $P^{(k)},Q^{(k)},R^{(k)}$ are regular functions of $a$ and $t$ and $A(a,q) = (q;q)_\infty(a;q)_\infty$. 
\label{prop:asymptotics_a<=1/9}
\end{proposition}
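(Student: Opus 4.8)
The plan is to substitute the ansatz \eqref{eq:expansion_Ansatz_G} for $G^{(k)}(u)$ into the canonical integral \eqref{eq:transformation_integral_a<1/9} and to evaluate the resulting monomial integrals against $\Theta_4$ and its first partial derivatives. Writing $p(u)=\f{1}{4}u^4-\alpha u^2-\beta u+\gamma$ as in \eqref{eq:transformation_formula}, the constant $\gamma$ immediately factors out as $\exp(\gamma/\epsilon)$, leaving integrals with exponent $\f{1}{\epsilon}\big(\f{1}{4}u^4-\alpha u^2-\beta u\big)$. The key step is the rescaling $u=\epsilon^{1/4}v$, which turns this exponent into $\f{1}{4}v^4-\big(\alpha/\epsilon^{1/2}\big)v^2-\big(\beta/\epsilon^{3/4}\big)v$. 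Since the contour endpoints $c_\pm\infty=\infty\exp(\pm i\pi/4)$ are exactly the integration rays of $\Theta_4$ in \eqref{eq:generalized_Airy}, the three monomial integrals coming from $P^{(k)}$, $uQ^{(k)}$ and $u^2R^{(k)}$ become, after accounting for the Jacobian factors $\epsilon^{1/4}$, $\epsilon^{1/2}$ and $\epsilon^{3/4}$, precisely $\epsilon^{1/4}\Theta_4(s_1,s_2)$, $-\epsilon^{1/2}\Theta_4^{(1)}(s_1,s_2)$ and $-\epsilon^{3/4}\Theta_4^{(2)}(s_1,s_2)$, evaluated at $s_1=\beta/\epsilon^{3/4}$ and $s_2=\alpha/\epsilon^{1/2}$. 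The two minus signs arise from differentiating the integral representation of $\Theta_4$ under the sign, since $\partial_{s_1}$ and $\partial_{s_2}$ bring down factors $-v$ and $-v^2$. Collecting these three contributions with the coefficients $P^{(k)},Q^{(k)},R^{(k)}$ reproduces the bracket on the right-hand side of \eqref{eq:asymptotics_1_phi_2_a,t<1}.

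It remains to dispose of the fourth term in the ansatz, $(u^3-2\alpha u-\beta)S^{(k)}(u)=p'(u)S^{(k)}(u)$ by \eqref{eq:derivative_p}. Here the plan is to integrate by parts: because $\f{d}{du}\exp(p(u)/\epsilon)=\f{1}{\epsilon}p'(u)\exp(p(u)/\epsilon)$, this term contributes $\epsilon$ times an integral of $\big(S^{(k)}\big)'(u)\exp(p(u)/\epsilon)$, with a boundary term that vanishes at $c_\pm\infty$ since $\op{Re}p(u)\to-\infty$ there. Thus the $S^{(k)}$ contribution is of relative order $\mathcal{O}(\epsilon)$ and can be absorbed into the error factor. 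This is exactly the mechanism underlying Theorem 2 of \cite{Ursell72}; repeated application would generate the full asymptotic series, but the single integration by parts suffices for the leading term displayed in \eqref{eq:asymptotics_1_phi_2_a,t<1}.

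I expect the main obstacle to be uniformity of the $\mathcal{O}(\epsilon)$ remainder over the whole rectangle $a\in[\f{1}{9}-d_a,\f{1}{9}+d_a]$, $t\in[\f{1}{3}-d_t,\f{1}{3}+d_t]$, including the regime in which the three saddle points have separated. This requires that the canonical map $z\mapsto u(z)$ of \eqref{eq:transformation_formula} be regular and bijective and that $P^{(k)},Q^{(k)},R^{(k)}$ together with $S^{(k)}(u)$ stay analytic and bounded uniformly as $(\tau,\delta)\to(0,0)$. These are precisely the hypotheses prepared in Sections \ref{section:location_of_the_saddle_points}--\ref{section:canonicaltrafo} and guaranteed by Ursell's theorem, so the residual work is to confirm that its conditions hold in the present setting and then to assemble the pieces; the identification of the rescaled monomial integrals with $\Theta_4$, $\Theta_4^{(1)}$ and $\Theta_4^{(2)}$ is the only genuinely new computation.
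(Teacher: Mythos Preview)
Your proposal is correct and follows essentially the same route as the paper. The paper's argument is simply to set up the canonical integral \eqref{eq:transformation_integral_a<1/9} with the ansatz \eqref{eq:expansion_Ansatz_G} and then invoke Theorem~2 of \cite{Ursell72} verbatim; you have made the mechanism behind that citation explicit (the rescaling $u=\epsilon^{1/4}v$ identifying the monomial integrals with $\Theta_4$, $\Theta_4^{(1)}$, $\Theta_4^{(2)}$, and the integration by parts on $p'(u)S^{(k)}(u)$), which is a welcome elaboration but not a different approach.
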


Substituting Eq.\eqref{eq:asymptotics_1_phi_2_a,t<1} into Eq.\eqref{eq:gf_generalized_DP} for $k=0$ and $1$ and $a=-w$, we obtain the scaling behaviour of $G(w,t,q)$ around the critical point $(w,t,q)=(-\f{1}{9},\f{1}{3},1)$ as stated in Theorem \ref{thm:scaling_around_multicritical_point}. In Fig. \ref{fig:scaling function}, we show the convergence of the asymptotic approximation of $F(s)$ obtained by rearranging Eq.\eqref{eq:scaling_relation_delta=0} against the exact scaling function.

\begin{figure}[hbt]
\centering
\includegraphics[width=0.6\textwidth]{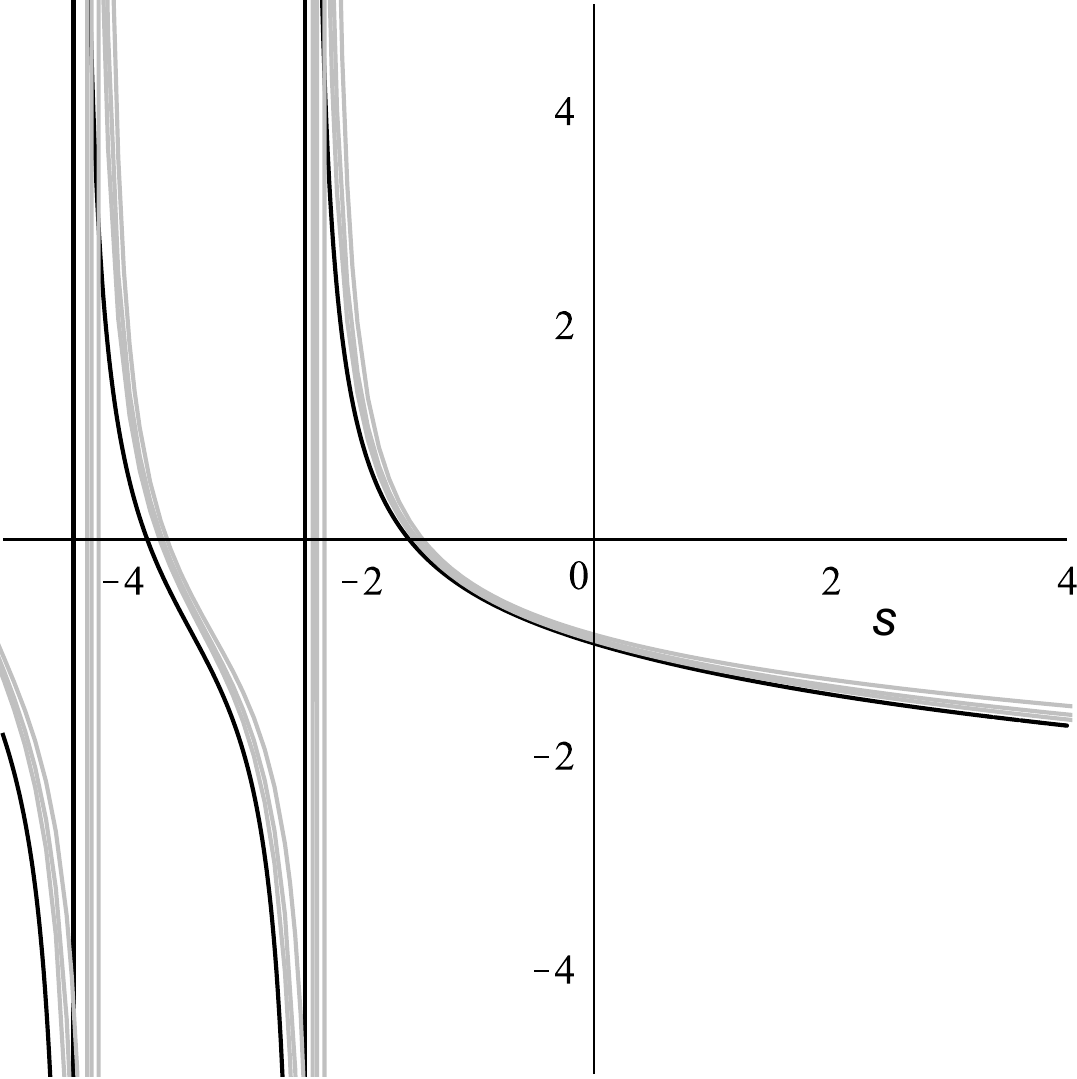}
\caption{Plot of the scaling function $F\big(\sqrt[4]{2}\,s\big)=\Phi\big(\sqrt[4]{2}\,s,0\big)$ (black) and the asymptotic approximation obtained from rearranging Eq.\eqref{eq:scaling_relation_delta=0} for $\epsilon=10^{-4},10^{-5},10^{-6}$ (gray, the smallest value corresponds to the closest approximation).}
  \label{fig:scaling function}
\end{figure}

As discussed in Lemma \ref{lemma:geometry_of_path_of_steepest_descent}, for $a<\f{1}{9}$ and $0<t<t_c^+(a)$, the integration contour $C$ used in Eq.\eqref{eq:exact_integral_formula} can be deformed such that it consists of two paths of steepest descent, connecting a saddle point on the real axis with infinity, and the asymptotics of $\phi(a,q^k t,q)$ can be obtained via the ordinary method of steepest descent. According to Section \ref{section:location_of_the_saddle_points}, the relevant saddle point coalesces with another saddle point for $t=t_c^+(a)<\f{1}{3}$. At this point, $\phi(a,q^kt,q)$ can be approximated in terms of Airy functions, with the special case $a=0$ having been treated in \cite{Haug15}.


\end{document}